\let\cl@chapter\undefined
\crefname{section}{Sect.}{Sects.}
\Crefname{section}{Section}{Sections}
\crefname{equation}{Eq.}{Eqs.}
\Crefname{equation}{Equation}{Equations}
\crefname{figure}{Fig.}{Figs.}
\Crefname{figure}{Figure}{Figures}
\crefname{definition}{Def.}{Defs.}
\Crefname{definition}{Definition}{Definitions}
\crefname{theorem}{Thm.}{Thms.}
\Crefname{theorem}{Theorem}{Theorems}
\crefname{innercustomthm}{Thm.}{Thms.}
\Crefname{innercustomthm}{Theorem}{Theorems}
\newenvironment{customthm}[1]
{\innercustomthm}
{\endinnercustomthm}
\newtheorem{innercustomresult}{Result}
\definecolor{darkblue}{RGB}{0,0,127} 
\definecolor{darkgreen}{RGB}{0,180,190}
\definecolor{darkred}{RGB}{200, 10, 42}
\renewcommand{\epsilon}{\varepsilon}
\newcommand{\psiv}{\psi^{\vphantom{\dagger}}}
\newcommand{\psid}{\psi^\dagger}
\begin{document}
	
	\title{Free fermions behind the disguise}
	
	\author{Samuel J.~Elman \and Adrian Chapman \and Steven~T.~Flammia}
	
	\authorrunning{S.\ J.\ Elman, A.\ Chapman, and S.\ T.\ Flammia}
	
	\institute{S.\ J.\ Elman \and A.\ Chapman \at
		Centre for Engineered Quantum Systems, School of Physics, The University of Sydney, Sydney, Australia \\
		\email{\{samuel.elman, adrian.chapman\}@sydney.edu.au}
		\and
		S.\ T.\ Flammia \at
		AWS Center for Quantum Computing, Pasadena, CA, 91125, USA \\
		\email{sflammi@amazon.com}
	}
	
	\date{} 
	
	\maketitle
	
	\begin{abstract}
		An invaluable method for probing the physics of a quantum many-body spin system is a mapping to noninteracting effective fermions.
		We find such mappings using only the frustration graph $G$ of a Hamiltonian $H$, i.e., the network of anticommutation relations between the Pauli terms in $H$ in a given basis.
		Specifically, when $G$ is (even-hole, claw)-free, we construct an explicit free-fermion solution for $H$ using only this structure of $G$, even when no Jordan-Wigner transformation exists. 
		The solution method is generic in that it applies for any values of the couplings. 
		This mapping generalizes both the classic Lieb-Schultz-Mattis solution of the XY model and an exact solution of a spin chain recently given by Fendley, dubbed ``free fermions in disguise.'' 
		Like Fendley's original example, the free-fermion operators that solve the model are generally highly nonlinear and nonlocal, but can nonetheless be found explicitly using a transfer operator defined in terms of the independent sets of $G$. 
		The associated single-particle energies are calculated using the roots of the independence polynomial of $G$, which are guaranteed to be real by a result of Chudnovsky and Seymour. 
		Furthermore, recognizing (even-hole, claw)-free graphs can be done in polynomial time, so recognizing when a spin model is solvable in this way is efficient.
		In a crucial step to proving our result, we additionally prove that there exists a hierarchy of commuting conserved charges for models whose frustration graphs are claw-free only, and hence these models are integrable.
		Finally, we give several example families of solvable and integrable models for which no Jordan-Wigner solution exists, and we give a detailed analysis of such a spin chain having 4-body couplings using this method. 
		\keywords{free fermion \and exact solution \and frustration graph \and even-hole-free graph \and claw-free graph}
	\end{abstract}
	
	\section{Introduction}
	\label{sec:intro}
	A notorious challenge for the simulation of quantum many-body systems is the exponential growth of the Hilbert space dimension in the number of constituent degrees of freedom.
	Systems for which this difficulty can be circumvented via an analytic solution are invaluable for at least two reasons.
	First, the discovery of a new class of analytic solutions opens up the prospect of tractable simulation to a new family of models, and potentially of new phenomenology.
	Second, analytic solutions can be taken as starting points for approximations to more realistic models, thus extending the reach of these methods. 
	
	For a quantum spin-$\sfrac{1}{2}$ (qubit) system, one remarkable type of analytic solution comes in the form of a duality to effective fermions.
	When the effective fermions are noninteracting, it can be said that we have found a means of restricting the model's essential behavior to the low-dimensional subspace of a single fermion, and the physics of the model is well-understood.
	The textbook example of a free-fermion mapping is the Jordan-Wigner transformation~\cite{jordan1928uber}, which was famously employed to solve the one-dimensional XY model by Lieb, Schultz, and Mattis \cite{lieb1961two}. 
	The key insight is the identification of nonlocal Pauli operators with fermionic ladder operators.
	In the fermionic picture, the $n$-qubit XY-model Hamiltonian is mapped to a free-fermion Hamiltonian on $2n$ fermionic modes.
	The model is then completely solved by exactly diagonalizing the model's $2n \times 2n$ free-fermion Hamiltonian, an exponential simplification from the naive brute-force diagonalization that one might expect to need in the qubit picture.
	This solution method is \textit{generic}, meaning it applies regardless of the values taken by the nonzero coupling constants in the Hamiltonian. 
	This is because the Jordan-Wigner transformation maps each term in the Hamiltonian linearly to a fermionic bilinear operator. 
	
	One physical signature of models that are solvable in this way is an energy spectrum $\{E_{\boldsymbol{x}}\}$ that is given in terms of a number $\alpha$ of single-particle energies $\epsilon_k$ by
	\begin{align}
	E_{\boldsymbol{x}}= \sum_{k = 1}^{\alpha} (-1)^{x_k} \epsilon_k\,,
	\label{eq:free_energy}
	\end{align}
	where $\alpha\le n$ and $\boldsymbol{x} \in \{0, 1\}^{\times \alpha}$ is a binary vector describing the occupation of each canonical fermionic mode. 
	We will refer to a spectrum of the form in~\cref{eq:free_energy} as \emph{free}. 
	We say that a Hamiltonian is \textit{solvable} if it has a free spectrum and it can be written in terms of its eigenmodes $\psi_j$ as
	\begin{align}
	H = \sum_{k = 1}^{\alpha} \epsilon_{k} [\psiv_k,\psid_k] \,,
	\label{eq:ffsolvable}
	\end{align}
	where the $\psiv_k$ obey the \textit{canonical anticommutation relations}, $\{\psiv_j, \psiv_k\} = 0$ and $\{\psiv_j,\psi_k^\dagger\} = \delta_{jk} I$. 
	When $\alpha < n$, a free spectrum will necessarily have degeneracies, since this is equivalent to the case where a subset of the energies $\{\epsilon_k\}_{k = 1}^n$ are equal to zero. 
	
	\begin{table*}
		\centering
		\setcellgapes{3pt}
		\makegapedcells
		\begin{tabular} {c c c c}
			\toprule
			\textit{Includes} & \multicolumn{2}{c}{\textit{Forbidden}} & \\  \cmidrule{2-4} 
			Simplicial Clique & Claw, $K_{1, 3}$ & Even hole, $C_{2k}$ & \\ \midrule
			\makecell{(a)\vspace{16mm}} \makecell{\includegraphics[width=0.18 \textwidth]{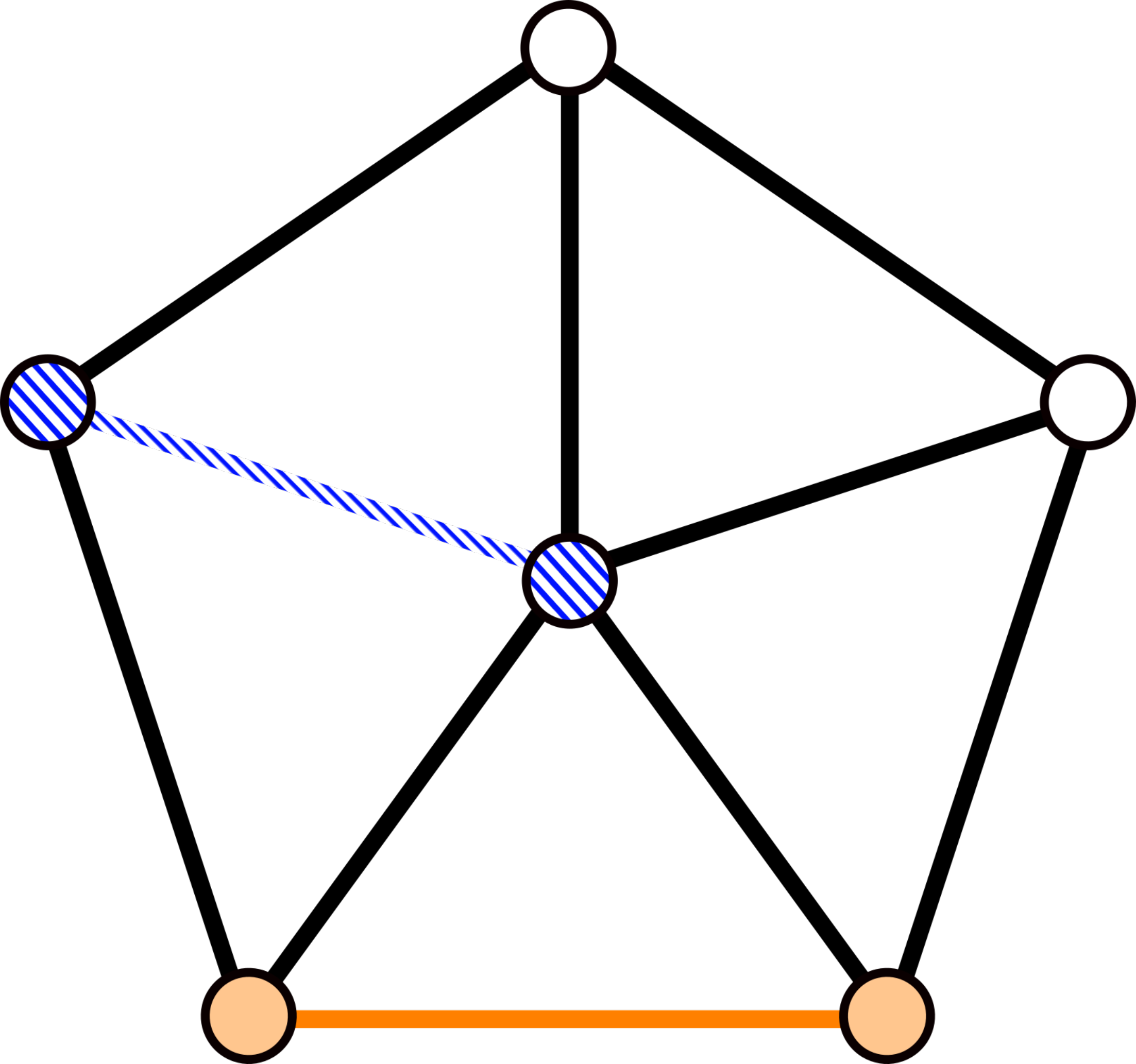}} &
			\makecell{(b)\vspace{16mm}} \makecell{\includegraphics[width=0.18 \textwidth]{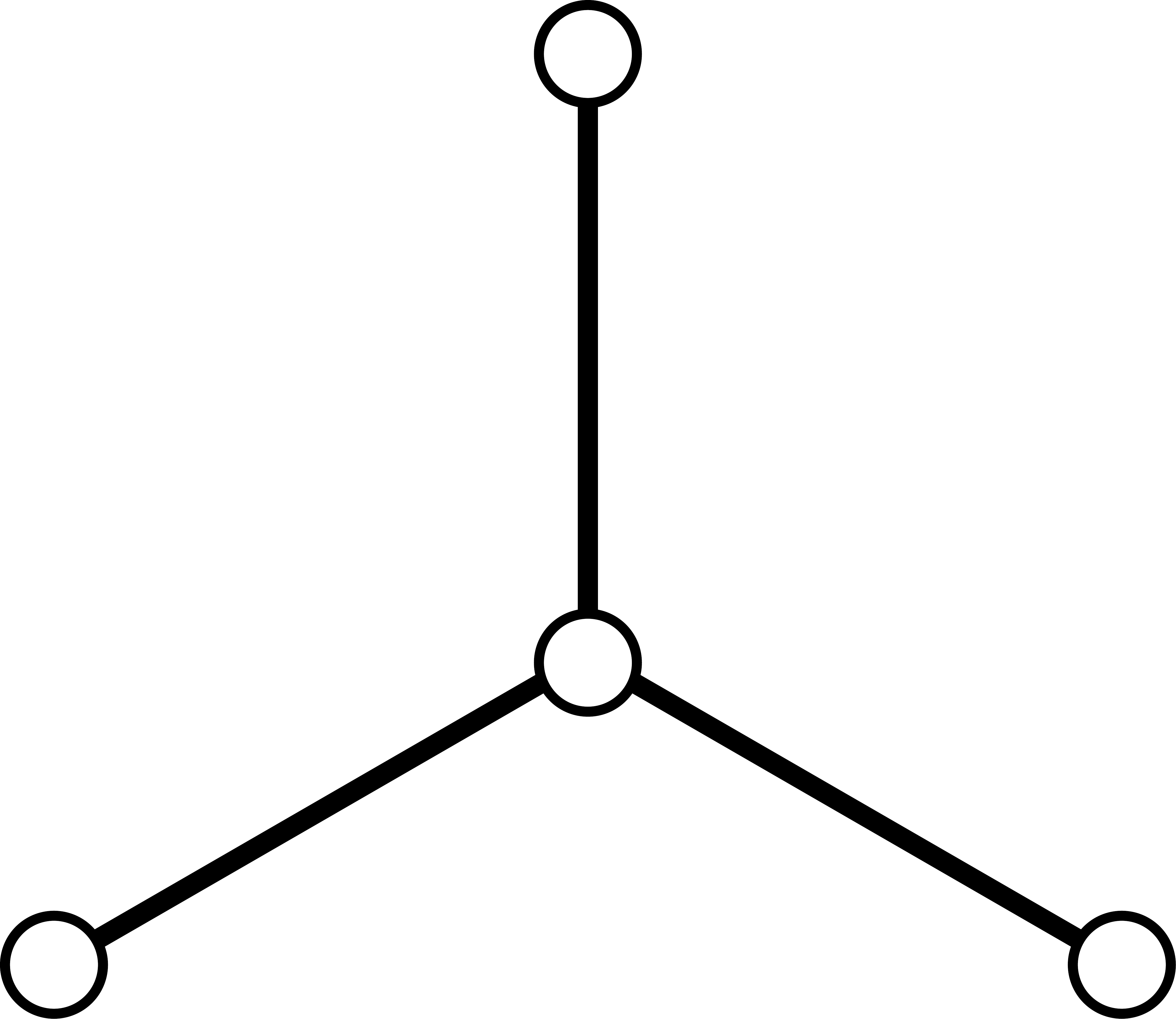}} & \makecell{(c)\vspace{16mm}} \makecell{\includegraphics[width=0.39\textwidth]{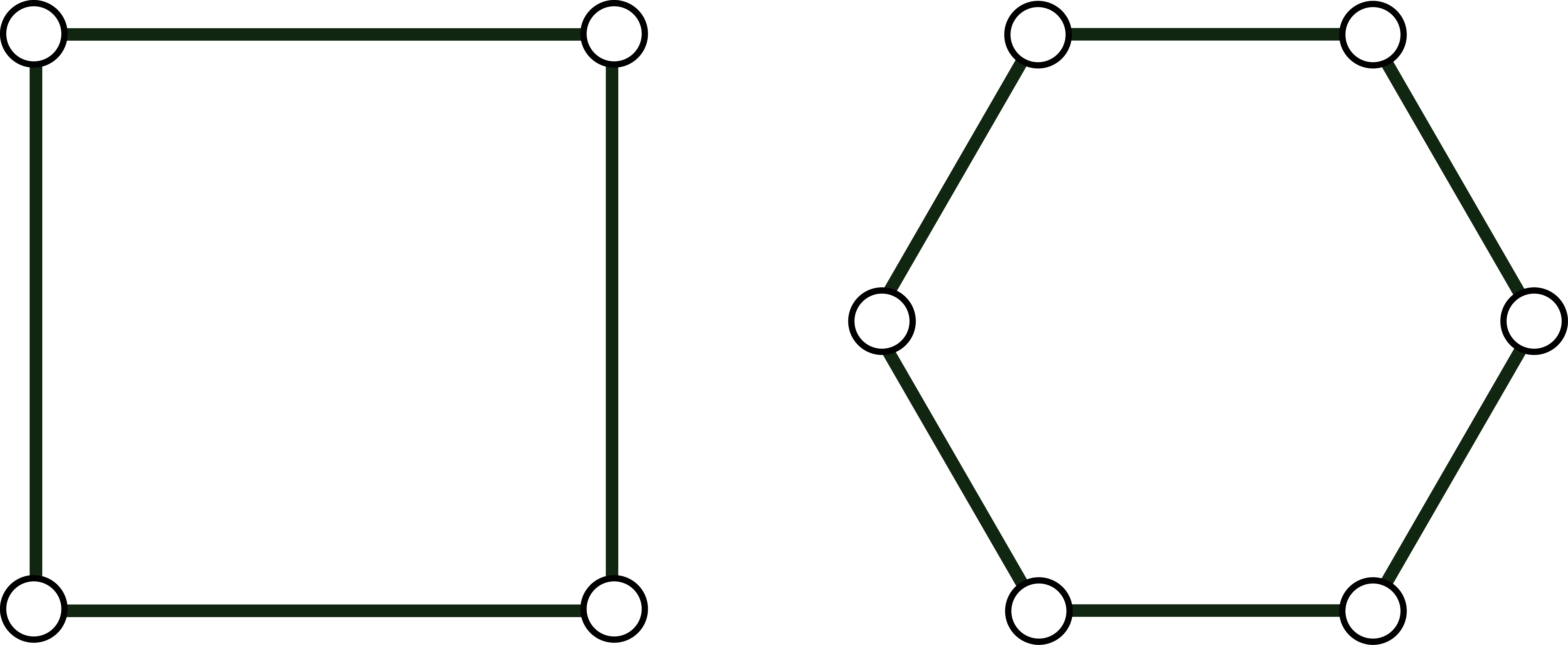}} & $\mbox{\Large \dots}$ \\ & & \mbox{\hspace{0.015\textwidth}} $k = 2$ \mbox{\hspace{0.15\textwidth}} $k = 3$ & \\ \bottomrule
		\end{tabular}
		\caption{
			(a) A graph has a simplicial clique (orange in the example) if it has a clique where the neighborhood of each constituent vertex, minus the original clique, induces another clique (see \cref{def:simplicial}). 
			The blue, hatched nodes are the induced neighbors of the left orange vertex, and they induce a clique (also hatched). 
			The induced neighbors of the right orange vertex similarly induce a clique, so this graph is simplicial. 
			A graph is claw-free or even-hole-free if none of its vertex-induced subgraphs contain (b) the ``claw'' $K_{1,3}$ or (c) an even hole $C_{2k}$. 
			If a graph is (even-hole, claw)-free, it necessarily contains a simplicial clique~\cite{CSSS}.}
		\label{tab:forbidden}
	\end{table*}

	Recently, two of the authors~\cite{Chapman2020Characterization} have given a simple necessary and sufficient condition for a qubit Hamiltonian $H$ to be solvable via a Jordan-Wigner mapping by looking at properties of the \textit{frustration graph} of $H$ (see \cref{def:frustration}). 
	This gives a complete solution for this best-known class of solvable models. 
	However, there exist models that are free and solvable, but that cannot be solved via any Jordan-Wigner mapping. 
	Such a model has been introduced by Fendley~\cite{fendley2019free} as free fermions ``in disguise''. 
	Fendley solves this model by directly defining the single-particle eigenstates for the Hamiltonian through its interaction terms. 
	The free fermions manifest nonlinearly and nonlocally in a basis which is dependent on the specific interaction strengths, but they remain free for all values of the couplings. 
	The solution is therefore generic. 
	This solution method has since been reproduced in a family of generalized spin-chain models~\cite{Alcaraz2020Free, Alcaraz2020Integrable}, including \emph{qudit} models, where the system is dual to so-called free parafermions~\cite{Fendley2014Free}. 
	The generic nature of the free spectrum in these models would suggest that the frustration-graph formalism of Ref.~\cite{Chapman2020Characterization} could be applied to understand this solution. 
	However, since these models provably do not admit a Jordan-Wigner mapping, it would seem the solution relies on much subtler commutation structures than previously understood. 
	
	In this work, we go behind the disguise and clarify the role that the frustration graph plays in solving these models. 
	We develop an infinite family of free-fermion solutions which generalizes Ref.~\cite{fendley2019free} by finding specific graph-theoretic conditions for when such a solution is possible. 
	Specifically, when the frustration graph of $H$ avoids certain obstructions known as claws and even holes (see \cref{tab:forbidden} and \cref{def:ECF}), then $H$ is solvable. 
	
	\begin{innercustomresult}{\rm \bf (Informal version of \cref{thm:maintheorem1} and \cref{thm:maintheorem2}.)} 
		If a Hamiltonian has an (even-hole, claw)-free frustration graph, then it has an explicit free-fermion solution.
	\end{innercustomresult}
	As a corollary to our result, we prove that independent sets in the frustration graph give us a family of conserved charges for the model whenever the frustration graph is only claw-free. That is, that when the frustration graph of $H$ avoids claws, the model is integrable.
	
	\begin{innercustomresult}{\rm \bf (Corollary of \cref{lem:conserved_charges}.)}
			If a Hamiltonian has a claw-free frustration graph, then the model admits a hierarchy of commuting conserved charges.
			\label{result2}
	\end{innercustomresult}
	
	The proof proceeds by considering the independence polynomial of the frustration graph. 
	The independence polynomial of a graph is the polynomial generating function that counts the independent sets in the graph. 
	We can incorporate detailed information about the Hamiltonian into the independence polynomial by attaching certain vertex weights given in terms of squares of Hamiltonian coupling strengths.
	We first prove \cref{result2} under the less restrictive assumption that the frustration graph is claw-free.  
	We then prove that when the frustration graph is additionally even-hole-free, the independence polynomial factorizes into a quadratic function of a certain transfer operator. 
	The single-particle spectrum can then be derived by looking at the zeros of the vertex-weighted independence polynomial. 
	Given knowledge of the spectrum, the transfer operator then acts like a raising operator when acting on a fiducial mode whose existence is guaranteed by the structure of (even-hole, claw)-free graphs. 
	The modes generated in this way allow us to define the eigenmodes of the free-fermionic Hamiltonian. 
	
	As in Fendley's original model~\cite{fendley2019free}, there are no ``physical modes'' to speak of from this derivation as there would be from a Jordan-Wigner transformation. 
	The eigenmodes are ``disguised'', and then emerge directly as nonlinear, nonlocal combinations of the underlying spin operators. 
	We call the eigenmodes revealed by this procedure \textit{incognito modes}. 
	
	We then describe explicit families of models with frustration graphs that satisfy the (even-hole, claw)-free condition. 
	The first of these examples is a small system, which is chosen explicitly as it has no generalized Jordan-Wigner solution and yet has a free-fermion solution of the form we consider. 
	We show that this particular model is in fact related to one with a Jordan-Wigner solution by a local rotation. 
	We next demonstrate how the family of generalized spin chain models included in Refs.~\cite{fendley2019free, Alcaraz2020Free, Alcaraz2020Integrable} fit into our formalism. 
	These models have a particular one-dimensional structure which makes their asymptotic dispersion relations amenable to the techniques of Toeplitz-matrix analysis. 
	Though this is not expected to be true of general (even-hole, claw)-free graphs, a structure theorem for these graphs demonstrates that we should expect their coarse topology to be one-dimensional, or possibly treelike.
	
	The structure of the paper is as follows: in \Cref{sec:informal}, we formally state all definitions and our main results. 
	In \Cref{sec:symmetries}, we discuss our result in the context of prior work. 
	In \Cref{sec:technical} we prove the main results as \cref{thm:maintheorem1} and \cref{thm:maintheorem2}. 
	In \Cref{sec:examples}, we demonstrate how specific examples fit into our formalism, and in \Cref{sec:discussion} we close with a discussion of possible future work.

	\section{Main Results}
	\label{sec:informal}
	We begin with self-contained statements of our main theorems and necessary supporting definitions. 
	First, let us fix graph-theoretic conventions. 
	A graph $G \coloneqq (V, E)$ consists of a set of vertices $V$ together with a set of 2-element subsets $E \subset V^{\times 2}$ called edges. 
	All graphs we consider are finite and simple: every pair of vertices neighbors by at most one edge, and the graphs contain no self loops. 
	An \textit{induced subgraph} is a graph $G[S] \coloneqq (S, E \cap S^{\times2})$ whose vertex set is $S\subseteq V$ and whose edge set consists of all of the edges in $E$ with both endpoints in $S$. 
	We denote the vertex and edge sets of $G[S]$ by $V_S$ and $E_S$, respectively. 
	We will also use the notation $G-W \coloneqq G[V\backslash W]$ to denote the induced subgraph of the graph $G$ by removing the set of vertices $W$. 
	We will often refer to a subset of vertices interchangeably with the subgraph it induces.
	The open neighborhood of a vertex, $\Gamma(v)$ is the set of vertices neighboring the vertex, $v$.
	The closed neighborhood of a vertex, $\Gamma[v]$, is the set of vertices neighboring the vertex, $v$, together with $v$ itself. 
	An \textit{independent set}, or \textit{stable set}, of a graph $G=(V,E)$ is a subset of vertices $S \subseteq V$ which induces a subgraph with no edges, $G[S] = (S, \{\})$.
	Notice that our definition includes the empty set and sets of one vertex as independent sets.

	A clique is a graph where every pair of vertices is neighboring. 
	For us, a particularly important type of clique in a graph is a simplicial clique  (See~\cref{tab:forbidden} (a)):
	\begin{definition}[Simplicial clique]
		\label{def:simplicial}
		A \textit{simplicial clique} $K_s$ in $G$ is a non-empty clique such that for every vertex, $v\in K_s$, the (closed or open) neighborhood of $v$ induces a clique in $G - K_s$. 
		That is, for each $v\in K_s$ we have that $K_v \coloneqq \Gamma[v] \backslash (K_s\backslash v)$ induces a clique in $G$.
	\end{definition}
	The claw, $K_{1, 3}$, is the complete bipartite graph between one vertex and a set of three non-neighboring vertices (See~\cref{tab:forbidden} (b)). 
	A path of length $\ell$ is a connected graph of $\ell + 1$ vertices and $\ell$ edges such that every vertex has at most two neighbors.
	A cycle, $C_{\ell}$, is a connected graph of $\ell$ edges and $\ell$ vertices such that each vertex has exactly two neighbors.
	Informally, a path of length $\ell$ is a cycle $C_{\ell + 1}$ with one edge removed.
	A hole in a graph $G$ is a subset of $\ell$ vertices $W \subseteq V$ such that $G[W] \cong C_{\ell}$ (i.e. an induced cycle of $G$), where $\ell \geq 4$. 
	A hole is called \textit{even} if it has an even number of vertices and edges.
	Importantly, our definition of an even hole includes holes of four vertices (See~\cref{tab:forbidden} (c)).
	
	Next we turn to definitions involving a physical many-body qubit model. 
	Consider an $n$-qubit Hamiltonian, $H$, written in a given basis of Pauli operators $\{\sigma^{\boldsymbol{j}}\}$ as
	\begin{align}
	H \coloneqq \sum_{\boldsymbol{j} \in V} h_{\boldsymbol{j}} \coloneqq \sum_{\boldsymbol{j} \in V}b_{\boldsymbol{j}}\sigma^{\boldsymbol{j}} \,,
	\label{eq:paulihdef}
	\end{align}
	where $V\subseteq \{0,x,y,z\}^{\otimes n}$ is a set of strings labeling the $n$-qubit Pauli operators in the natural way. 
	A frustration graph describes the commutation relations between the Hamiltonian terms as follows:
	
	\begin{definition}[Frustration graph]
		\label{def:frustration}
		The \textit{frustration graph} of a Hamiltonian of the form in \cref{eq:paulihdef} is a graph, $G(H)=(V,E)$, with vertices in $V$ in one-to-one correspondence with the Pauli terms $\{\sigma^{\boldsymbol{j}}\}_{\boldsymbol{j} \in V}$ in $H$, and edge set $E$ defined by the commutation relations between the Hamiltonian terms:
		\begin{equation}
		E=\left\{(\boldsymbol{j},\boldsymbol{k})\big|\{\sigma^{\boldsymbol{j}}, \sigma^{\boldsymbol{k}}\}=0\right\}.
		\end{equation}
		That is, two vertices in $V$ are adjacent in $G(H)$ if and only if their corresponding Paulis anticommute. 
	\end{definition}
	The frustration graph is the complement of the Pauli graph introduced by Planat~\cite{Planat2007Pauli}.
	Notice that it is always simple by construction. Where clear from context, we will drop the dependence on the Hamiltonian from $G(H)$.
	
	\begin{definition}[ECF]
		\label{def:ECF}
		A graph $G$ is said to be \textit{(even-hole, claw)-free}, or \textit{ECF}, if it contains no even holes and no claws among its induced subgraphs (see \cref{tab:forbidden}). 
		A Hamiltonian $H$ is ECF if its frustration graph $G(H)$ is ECF. 
	\end{definition}
	It can be shown that all (even-hole, claw)-free graphs are \textit{simplicial}, meaning they contain a simplicial clique~\cite{CSSS}. 
	If a Hamiltonian $H$ is ECF then its frustration graph is necessarily simplicial, so we say that $H$ is simplicial as well. 
	
	Our first main result says that the spectrum of an ECF Hamiltonian is free, with single-particle energies given by the roots of a certain polynomial.
	\begin{theorem}
		Every ECF Hamiltonian $H$ has a free spectrum of the form in \cref{eq:free_energy}. 
		In particular, the single-particle energies $\{\epsilon_j\}_{j=1}^{\alpha(G)}$ satisfy
		\begin{align}
		P_G\bigl(-1/\epsilon_j^2\bigr) = 0 \,,
		\label{eq:spenergycond}
		\end{align} 
		where $P_G(x)$ is the vertex-weighted independence polynomial of the frustration graph $G(H)$,
		\begin{align}
		P_G(x) \coloneqq \sum_{k = 0}^{\alpha(G)} \sum_{S \in \mathcal{S}^{(k)}} \left(\prod_{\boldsymbol{j} \in S} b_{\boldsymbol{j}}^2\right) x^{k} \,.
		\label{eq:vertex_weighted_poly}
		\end{align}
		$\mathcal{S}^{(k)}$ is the set of $k$-vertex independent sets of $G(H)$, and $\alpha(G)$ is the independence number of $G(H)$. 
		\label{thm:maintheorem1}
	\end{theorem}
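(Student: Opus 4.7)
The plan is to extract the free spectrum from an operator-valued generating function over independent sets of $G(H)$, using the commuting conserved charges from Result 2 as labels for invariant sectors and invoking Chudnovsky-Seymour's real-rootedness theorem to produce real single-particle energies at the end. The even-hole-free condition enters precisely when one requires this generating function to factor.

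Concretely, I would introduce the operator-valued independence-set generating function
$$Z(\lambda) \coloneqq \sum_{k=0}^{\alpha(G)} \lambda^k \sum_{S \in \mathcal{S}^{(k)}} \prod_{\boldsymbol{j} \in S} h_{\boldsymbol{j}} \,,$$
where each product is unambiguous since the members of an independent set pairwise commute. After appropriate symmetrisation, its $\lambda$-coefficients are the commuting charges furnished by Result 2; in particular, $H$ is the coefficient of $\lambda^1$. Using the simplicial clique $K_s \subseteq G$ guaranteed by the CSSS structure theorem, I would establish the deletion-contraction recurrence
$$Z_G(\lambda) = Z_{G-v}(\lambda) + \lambda\, h_v\, Z_{G-\Gamma[v]}(\lambda)$$
for any $v \in K_s$. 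Simpliciality ensures that $G - \Gamma[v]$ remains ECF, so the recursion closes, and its scalar abelianisation reproduces the usual deletion-contraction recurrence for $P_G$.

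Combining this recurrence with the anticommutation $\{h_u, h_v\} = 0$ for $(u,v) \in E$ and inducting on $|V|$, I would derive the operator identity
$$Z(\lambda)\, Z(-\lambda) = P_G(-\lambda^2)\, I \,.$$
This identity trivialises on each joint eigenspace $\mathcal{H}_\chi$ of the conserved charges: there $Z(\lambda)$ is a scalar polynomial in $\lambda$ whose product with $Z(-\lambda)$ equals $P_G(-\lambda^2)$. Chudnovsky-Seymour's theorem makes every root of $P_G$ a real negative number $r_k = -1/\epsilon_k^2$, so $P_G(-\lambda^2) = \prod_k(1 - \epsilon_k^2 \lambda^2)$ and hence $Z(\lambda)|_{\mathcal{H}_\chi} = \prod_k(1 + \sigma_k\,\epsilon_k\,\lambda)$ for some signs $\sigma_k \in \{\pm 1\}$. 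Reading off the coefficient of $\lambda^1$ yields $H|_{\mathcal{H}_\chi} = \sum_k \sigma_k\,\epsilon_k$, which is precisely the free form of \cref{eq:free_energy} with single-particle energies satisfying \cref{eq:spenergycond}.

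The main obstacle I anticipate is the telescoping argument producing the quadratic identity $Z(\lambda)\,Z(-\lambda) = P_G(-\lambda^2)\,I$. One must expand the product over pairs $(S,T)$ of independent sets and show that every Pauli-nontrivial term cancels in pairs, leaving only the scalar contribution. The forbidden subgraphs of \cref{def:ECF} are exactly the local obstructions to this cancellation: a claw breaks the commutation structure that makes the Result 2 charges central, while an even hole leaves an uncancelled residue coming from the two alternating ways of splitting its vertices into independent sets. Showing that (even-hole, claw)-freeness is precisely the combinatorial condition needed for the cross-terms to vanish is the crux of the argument, and the step where the full strength of the ECF hypothesis beyond mere claw-freeness must be used.
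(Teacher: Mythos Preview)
Your $Z(\lambda)$ is the paper's transfer operator $T_G(-\lambda)$, and the identity $Z(\lambda)Z(-\lambda)=P_G(-\lambda^2)I$ that you aim to prove is the paper's Lemma~\ref{lem:transfer_poly}; your sketch of its proof (pairing independent sets, cancelling cross-terms path-by-path using claw-freeness, with even-hole-freeness killing the residual cycle contributions) is likewise the paper's argument. Up to this point the two approaches coincide.

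The gap is in what follows. Restricting to a joint eigenspace of the commuting $Q^{(k)}$, you correctly conclude that $Z(\lambda)$ acts as a scalar polynomial $p(\lambda)$ with $p(\lambda)p(-\lambda)=P_G(-\lambda^2)$, hence $p(\lambda)=\prod_k(1+\sigma_k\epsilon_k\lambda)$, so the $H$-eigenvalue there is $\sum_k\sigma_k\epsilon_k$. But this only shows that every eigenvalue of $H$ lies in the set $\{\sum_k(-1)^{x_k}\epsilon_k\}$; it does not show that all $2^{\alpha}$ sign patterns are actually realised, let alone with equal multiplicity. A free spectrum in the sense of \cref{eq:free_energy} asserts precisely this surjectivity, and nothing in your argument rules out that only a proper subset of sign patterns appears among the joint eigenspaces of the $Q^{(k)}$.

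The paper closes this gap not by analysing eigenspaces but by constructing explicit fermionic ladder operators: the incognito modes $\psi_j=N_j^{-1}T_G(-u_j)\chi T_G(u_j)$, built from a simplicial mode $\chi$ anticommuting with a simplicial clique. Lemmas~\ref{lem:simplicialTcomm} and~\ref{lem:operatorcomms} establish the raising relation $[H,\psi_j]=-2\epsilon_j\psi_j$ and the canonical anticommutation relations, after which a contour-integral identity for $-\partial_u\ln T_G(u)$ yields $H=\sum_j\epsilon_j[\psi_j,\psi_j^\dagger]$ exactly. The CAR representation then forces every sign pattern to occur with uniform multiplicity. Your plan invokes the simplicial clique only for a deletion--contraction recurrence, but that recurrence holds at any vertex (ECF is hereditary, so simpliciality is not needed for the recursion to close); the essential role of simpliciality in the paper is to provide the anchor $\chi$ from which the ladder operators are built, and that ingredient is absent from your approach.
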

	
	An important result that we will show is that even if the frustration graph is only claw-free, then the Hamiltonian is still integrable, as there exists a set of mutually commuting conserved charges.
	We consider this a result of independent interest to many-body physicists.
	
	\begin{definition}[Independent-set charges]
		\label{def:independent_set_charge}
		Given a Hamiltonian of the form in Eq.~(\ref{eq:paulihdef}) with frustration graph $G(H)$, we define the $\alpha(G) + 1$ \emph{independent-set charges} as
		\begin{align}
		Q^{(k)} \coloneqq \sum_{S \in \mathcal{S}^{(k)}} \prod_{\boldsymbol{m} \in S} h_{\boldsymbol{m}}\,, \qquad k \in \{0, 1, \dots, \alpha(G) \}\,,
		\end{align}
		with the convention that $Q^{(0)} \coloneqq I$. Additionally, notice that $Q^{(1)} = H$. 
	\end{definition}
	
	As we will prove in \cref{lem:conserved_charges} below, the independent-set charges satisfy
	\begin{equation}	
	\big[Q^{(r)},Q^{(s)}\big]=0\,, \qquad \forall \ r,s \in \{1, \dots, \alpha(G) \}. 
	\end{equation}
	Since $Q^{(1)} = H$, this demonstrates that the charges are conserved.
	To take advantage of the independent-set charges, we exploit the simplicial property of $H$ and define a fiducial mode, $\chi$, in terms of which we can express the ``incognito modes". 
	
	\begin{definition}[Incognito mode, simplicial mode]
		\label{def:incognito}
		Given a simplicial Hamiltonian of the form in Eq.~(\ref{eq:paulihdef}) with frustration graph $G(H)$, we define a \textit{simplicial mode} $\chi$ to be any Pauli operator which is not present in the original Hamiltonian and which anticommutes with all of the operators in a simplicial clique of $G(H)$. 
		The $\alpha(G)$ \textit{incognito modes} of $H$ are defined with respect to a given simplicial mode $\chi$ as 
		\begin{align}
		\psi_j = N_j^{-1} T_G (-u_j) \chi T_G(u_j)\,, \qquad j \in \{1, \dots, \alpha(G) \} \,,
		\label{eq:incognitodef}
		\end{align}
		where $u_j \coloneqq 1/\epsilon_j$ for the single-particle energy $\epsilon_j$ satisfying \cref{eq:spenergycond}, $T_G(u)$ is a transfer operator
		\begin{align}
		T_{G} (u) \coloneqq \sum_{j = 0}^{\alpha(G)} (-u)^j Q^{(j)} \,,
		\label{eq:transfer_op}
		\end{align}
		and $N_j^{-1}$ is a normalization factor which is computable (see \cref{eq:normalization}).
	\end{definition}
	
	Note that we can always construct a simplicial mode for any simplicial Hamiltonian.
	To do so we introduce an additional (fictitious) spin to the system and augment each Hamiltonian term in the simplicial clique with a Pauli-X applied to the extra spin --- notice that this will not affect the frustration graph.
	The simplicial mode is then defined by a Pauli-Z operator applied to the additional spin; clearly, the simplicial mode will  anti-commute with all terms in the simplicial clique, but no other Hamiltonian terms.
	
	\begin{theorem}
		An ECF Hamiltonian $H$ is free-fermion-solvable via \cref{eq:ffsolvable} with eigenmodes given by its incognito modes.
		\label{thm:maintheorem2}
	\end{theorem}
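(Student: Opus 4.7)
The plan is to verify that the incognito modes $\{\psi_j\}_{j=1}^{\alpha(G)}$ from \cref{def:incognito} are genuine canonical fermionic eigenmodes of $H$, from which \cref{eq:ffsolvable} will follow. Since \cref{thm:maintheorem1} already identifies the single-particle energies $\epsilon_j$ as inverse square-roots of the roots of $P_G$, my task reduces to three verifications: (i) the Heisenberg eigen-equation $[H,\psi_j]=-2\epsilon_j\psi_j$; (ii) the canonical anticommutation relations $\{\psi_j,\psi_k\}=0$ and $\{\psi_j,\psi_k^\dagger\}=\delta_{jk}I$; and (iii) assembling these into the claimed decomposition of $H$.

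For step (i), the key input is \cref{lem:conserved_charges}, which gives $[H,Q^{(k)}]=0$ and hence $[H,T_G(u)]=0$ for every $u$. Consequently $[H,\psi_j]=N_j^{-1}T_G(-u_j)[H,\chi]T_G(u_j)$. Only the terms of $H$ belonging to the simplicial clique $K_s$ anticommute with $\chi$, so $[H,\chi]=2H_s\chi$ with $H_s=\sum_{\boldsymbol{j}\in K_s}h_{\boldsymbol{j}}$. I would expand $T_G(-u_j)H_s\chi T_G(u_j)$ in the basis of independent sets, using the simplicial property of $K_s$ (\cref{def:simplicial}) to identify each product $h_{\boldsymbol{j}}Q^{(r)}$ for $\boldsymbol{j}\in K_s$ as a sum of independent-set products. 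The resulting expression should reorganize, via telescoping in the index of independent-set size, into the schematic form $\bigl[-\epsilon_j\cdot(\text{mode}) + \text{const}\cdot P_G(-u_j^2)\cdot(\text{operator})\bigr]$; specializing to $u_j$ with $P_G(-u_j^2)=0$ then leaves exactly the desired eigen-equation.

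For step (ii), the normalization $\{\psi_j,\psi_j^\dagger\}=I$ fixes $N_j$ through an identity involving $P_G'(-u_j^{-2})$, well-defined because the roots of $P_G$ are real and simple for claw-free graphs (Chudnovsky--Seymour). Orthogonality $\{\psi_j,\psi_k^\dagger\}=0$ for $j\neq k$ follows from the eigen-equation of step (i) together with the spectral gap $\epsilon_j\neq\epsilon_k$: applying $[H,\cdot]$ to the anticommutator yields $2(-\epsilon_j+\epsilon_k)\{\psi_j,\psi_k^\dagger\}$, which forces it to vanish, and $\{\psi_j,\psi_k\}=0$ is handled analogously. Step (iii) then combines (i) and (ii) with the completeness of $\alpha(G)$ modes implicit in \cref{thm:maintheorem1} to reconstruct $H=\sum_k\epsilon_k[\psiv_k,\psid_k]$ up to an additive constant fixed by tracelessness of both sides.

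The principal obstacle is the combinatorial identity needed in step (i): showing that the conjugation $T_G(-u)H_sT_G(u)$, when sandwiched around $\chi$, reorganizes cleanly into $P_G(-u^2)$ times a simpler operator plus a term proportional to $\psi_j$ itself. The ECF hypothesis is doing the crucial work here: claw-freeness underpins the commutation of charges used pervasively, while even-hole-freeness ensures that cross-terms arising from products of overlapping independent sets collapse into multiples of the independence polynomial rather than producing obstructive lower-order charges. Once this identity is in hand, the remaining verifications are essentially linear-algebraic consequences supplied by \cref{thm:maintheorem1}.
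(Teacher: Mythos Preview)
Your step (i) is on the right track and corresponds to the paper's \cref{lem:simplicialTcomm}, though the combinatorial identity you describe as ``telescoping'' is considerably more delicate than that; the paper devotes several pages of clique recursions and a lengthy $\Delta=0$ computation to establish it. Granting that, the eigen-equation $[H,\psi_j]=-2\epsilon_j\psi_j$ does follow as you say.

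Step (ii) has a genuine error in the off-diagonal case. From $[H,\psi_j]=-2\epsilon_j\psi_j$ and $[H,\psi_k^\dagger]=2\epsilon_k\psi_k^\dagger$ you correctly obtain $[H,\{\psi_j,\psi_k^\dagger\}]=2(\epsilon_k-\epsilon_j)\{\psi_j,\psi_k^\dagger\}$, but this is simply another Heisenberg eigen-equation; it does not force the anticommutator to vanish unless you have already shown it to be a scalar multiple of the identity, which is precisely what is at issue. (The operator $\psi_j$ itself satisfies such an equation with nonzero eigenvalue and is certainly nonzero.) The paper instead derives the intertwining relation $(u_\ell+u)T_G(u)\psi_\ell=(u_\ell-u)\psi_\ell T_G(u)$ and uses it to show directly that $\{\psi_\ell,\psi_{-m}\}$ is proportional to $P_G(-u_m^2)$, which vanishes for $m\neq\ell$.

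Step (iii) is the more serious gap. The paper explicitly cautions that the eigen-equation together with the CAR only yields $H=\sum_k\epsilon_k[\psiv_k,\psid_k]+C$ with $C$ in the commutant of the fermion algebra; on the full qubit Hilbert space this commutant is large, and tracelessness gives only $\tr C=0$, not $C=0$. Your appeal to \cref{thm:maintheorem1} for ``completeness'' does not close this: in the paper Theorems~\ref{thm:maintheorem1} and~\ref{thm:maintheorem2} are proven \emph{simultaneously} by the same final argument, so \cref{thm:maintheorem1} is not available as an independent input. The actual mechanism is to define the higher Hamiltonians $H^{(k)}$ via the logarithmic derivative $-\partial_u\ln T_G(u)$, evaluate $H^{(1)}=H$ by a contour integral picking up residues at the poles $\pm\epsilon_j$ of $P_G(-u^2)^{-1}$, and match those residues term-by-term against a direct computation of $[\psiv_j,\psid_j]$ obtained from the same intertwining relation. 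This contour argument is what eliminates the commutant ambiguity, and it is the essential ingredient missing from your outline.
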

	Our proofs of \Cref{thm:maintheorem1}~and~\Cref{thm:maintheorem2} closely resemble the solution method introduced by Fendley in Ref.~\cite{fendley2019free}. 
	Importantly, the fact that our Hamiltonians can be written in the form of \cref{eq:ffsolvable} implies that each energy level in the Hamiltonian has the same degeneracy, as similarly shown in  Ref.~\cite{fendley2019free}. 
	The operative technical insight is that many of the key properties of that model, and its generalizations in Refs.~\cite{Alcaraz2020Free, Alcaraz2020Integrable}, are actually special cases of more general recursion relations in the class of models we identify.
	
	\section{Relation to prior work}
	\label{sec:symmetries}
	Since its discovery, the Jordan-Wigner transformation~\cite{jordan1928uber} and subsequent generalizations~\cite{Fradkin1989JordanWigner, Wang1991Ground, Huerta1993BoseFermi, Shaofeng1995JordanWigner, Batista2001Generalized, Kitaev2006Anyons, Nussinov2009Bond, Galitski2010Fermionization, Cobanera2011Bond, backens2019jordan} have enjoyed great success in probing the fundamental physics of quantum many-body spin models, as well as classical statistical mechanics models through so-called transfer-matrix methods~\cite{Onsager1944Crystal, schultz1964twodimensional, kochmaski1997generalized}.
	An understanding of these mappings has furthermore proven useful for designing fermion-to-qubit mappings with desired properties for simulating fermionic systems on a quantum computer~\cite{Ortiz2001Quantum, Bravyi2002Fermionic, Verstraete2005Mapping, Nussinov2012Arbitrary, Bravyi2017Tapering, Havlek2017Operator, Steudtner2018Fermion, Setia2019Superfast, Jiang2019Majorana, Jiang2020Optimal}. 
	Here, operator locality in the dual spin model is generally enabled through coupling to an auxiliary gauge field~\cite{Chen2018Exact, Chen2019Bosonization, Tantivasadakarn2020JordanWigner}, which endows fermionic-pair excitations with the structure of freely deformable strings on the spin lattice~\cite{Levin2003Fermions}. 
	The preponderance of these mappings suggests that a fundamental theory of physics containing fermionic degrees of freedom need not hold fermions as fundamental objects~\cite{Ball2005Fermions, Wen2003Quantum}. 
	
	Free-fermion models have an interesting connection to graph theory. 
	The dynamics of free-fermion models are equivalent to matchgate circuits~\cite{Knill2001Fermionic,Terhal2002Classical, Bravyi2006Universal, Jozsa2008Matchgates, Brod2011Extending, Hebenstreit2019All}, which were originally developed in the context of counting perfect matchings in graphs~\cite{Valiant2002Quantum, Cai2006Valiants, Cai2007Theory, Valiant2008Holographic}. 
	Independently, graph-theoretic methods have been utilized in quantum information in the context of variational quantum eigensolvers~\cite{Crawford2019Efficient, Izmaylov2019Unitary, BonetMonroig2019Nearly, Gokhale2019Minimizing, Yen2020Measuring, Verteletskyi2020Measurement, Zhao2020Measurement}, where the frustration graph is commonly known as the \emph{anticompatibility} graph. 
	Inspired by these methods, two of the authors have shown that a generalized Jordan-Wigner transformation exists for exactly those models for which the frustration graph is a line graph~\cite{Chapman2020Characterization}. 
	
	\begin{definition}[Line Graph]
		A \emph{line graph} $L(R) \coloneqq (E, F)$ is a graph whose vertex set is in one-to-one correspondence with the edges $E$ of a \emph{root graph} $R \coloneqq (V, E)$. 
		Vertices in $L(R)$, $e_1$, $e_2 \in E$, are neighboring if and only if $|e_1 \cap e_2| = 1$, i.e. the edges in $R$ are incident at a vertex in $V$.
	\end{definition}
	
	We note that line graphs also play an important role in understanding the spectrum of certain tight-binding models~\cite{Kollar2019Line,Kollar2019Hyperbolic}, but we will not discuss these models further here. 
	
	A generalized Jordan-Wigner transformation maps a spin Hamiltonian of the form in Eq.~(\ref{eq:paulihdef}) to one which is quadratic in \emph{Majorana fermion} modes $\{\gamma_j\}$. 
	These are Hermitian operators, which satisfy canonical anticommutation relations
	\begin{align}
	\{\gamma_j, \gamma_k\} = 2 \delta_{jk} I \mbox{\hspace{5mm}} \mathrm{and} \mbox{\hspace{5mm}} \gamma_j^{\dagger} =\gamma_j \mbox{\hspace{5mm}} \forall \ j, \ k \,.
	\end{align}
	That is, when solving a Hamiltonian of the form in Eq.~(\ref{eq:paulihdef}) by Jordan-Wigner, we are asking whether there exists a mapping $\phi: V \mapsto \widetilde{V}^{\times 2}$ acting on the Pauli terms of $H$, and effecting
	\begin{align}
	\sigma^{\boldsymbol{j}} \mapsto i \gamma_{\phi_1(\boldsymbol{j})} \gamma_{\phi_2(\boldsymbol{j})} \mbox{\hspace{5mm}} \forall \ \boldsymbol{j} \in V
	\label{eq:phidef}
	\end{align}
	such that
	\begin{align}
	H \mapsto \widetilde{H} \coloneqq \frac{i}{2} \sum_{j, k \in \widetilde{V}} h_{jk} \gamma_j \gamma_k \coloneqq \frac{i}{2} \boldsymbol{\gamma}\cdot \mathbf{h} \cdot \boldsymbol{\gamma}^{\mathrm{T}}
	\label{eq:ffmap} \mathrm{,}
	\end{align}
	in a way that preserves the commutation relations between terms, i.e. $G(H) \simeq G(\widetilde{H})$. 
	The coefficient matrix $\mathbf{h}$ --- called the single-particle Hamiltonian --- is necessarily antisymmetric, since any symmetric part will vanish under the sum in Eq.~(\ref{eq:ffmap}), and we may take $H$ to be traceless without loss of generality. 
	The central theorem of Ref.~\cite{Chapman2020Characterization} gives a necessary and sufficient criterion for a generalized Jordan-Wigner solution to exist for a particular qubit Hamiltonian.
	
	\begin{theorem}[{Thm.\ 1 of Ref.~\cite{Chapman2020Characterization}}]
		An injective map $\phi$ as defined in \cref{eq:phidef} and \cref{eq:ffmap}
		such that $G(H) \simeq G(\widetilde{H})$ exists for the Hamiltonian $H$ as defined in \cref{eq:paulihdef} if and only if there exists a root graph $R$ such that
		\begin{align}
		G(H) \simeq L(R) \,.
		\label{eq:lineequiv}
		\end{align}
		\label{eq:thm1chapman}
	\end{theorem}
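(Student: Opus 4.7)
The statement is an ``if and only if,'' so the plan is to treat the two directions separately, both relying on a single algebraic observation: for two Majorana bilinears $i\gamma_a\gamma_b$ and $i\gamma_c\gamma_d$ (with $a\neq b$, $c\neq d$), the canonical anticommutation relations $\{\gamma_j,\gamma_k\}=2\delta_{jk}I$ give $\{i\gamma_a\gamma_b, i\gamma_c\gamma_d\}=0$ precisely when $|\{a,b\}\cap\{c,d\}|=1$, and the bilinears commute otherwise. I would first record this as a short lemma, as it is exactly the dictionary between ``incidence in a root graph'' and ``anticommutation of Majorana bilinears.''

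For the reverse direction, suppose $G(H)\simeq L(R)$ for some root graph $R$ via an isomorphism $\iota: V\to E(R)$. Define $\phi$ by writing each edge $\iota(\boldsymbol{j})=\{a,b\}$ and setting $\phi_1(\boldsymbol{j})=a$, $\phi_2(\boldsymbol{j})=b$, then define $\widetilde{H}$ via \cref{eq:ffmap} with an antisymmetric coefficient matrix $\mathbf{h}$ placing the couplings $b_{\boldsymbol{j}}$ in the entries indexed by $\phi(\boldsymbol{j})$. By definition of the line graph, two vertices of $L(R)$ are adjacent iff the corresponding edges of $R$ are incident; combined with the algebraic lemma, this yields $G(\widetilde{H})\simeq L(R)\simeq G(H)$, and injectivity of $\phi$ is inherited from that of $\iota$.

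For the forward direction, assume an injective $\phi$ as in \cref{eq:phidef} exists with $G(H)\simeq G(\widetilde{H})$. Construct a root graph $R$ on the vertex set $\widetilde{V}$ with edges $E(R) = \{\{\phi_1(\boldsymbol{j}),\phi_2(\boldsymbol{j})\}: \boldsymbol{j}\in V\}$; injectivity of $\phi$ promotes this to a bijection $V\leftrightarrow E(R)$. By the lemma, $G(\widetilde{H})$ coincides with $L(R)$ under this bijection, and composing with the assumed isomorphism $G(H)\simeq G(\widetilde{H})$ delivers $G(H)\simeq L(R)$, which is \cref{eq:lineequiv}.

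The only subtle point, and what I expect to be the main obstacle, is to verify that $\phi$ genuinely returns an unordered pair of \emph{distinct} Majorana indices for each Pauli term, so that the edges of $R$ are bona fide simple-graph edges and no two Pauli terms collapse onto the same bilinear. Since $i\gamma_a\gamma_b=-i\gamma_b\gamma_a$, swapping $\phi_1$ and $\phi_2$ merely flips the sign of the corresponding coupling and is absorbed into the antisymmetry of $\mathbf{h}$, so orientation is a convention rather than an obstruction; the injectivity hypothesis on $\phi$ rules out repeated edges. The substantive content is then simply that Majorana bilinears realize line-graph incidence, and the bulk of the original argument in Ref.~\cite{Chapman2020Characterization} goes into exploiting this characterization (via Whitney/Beineke-type line-graph recognition) rather than the equivalence itself.
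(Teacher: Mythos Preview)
The paper does not itself prove this theorem: it is quoted verbatim as Theorem~1 of Ref.~\cite{Chapman2020Characterization} in the ``Relation to prior work'' section, with no accompanying argument. There is therefore no in-paper proof to compare your proposal against.

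That said, your argument is correct and is precisely the natural one. The single algebraic lemma you isolate---that $i\gamma_a\gamma_b$ and $i\gamma_c\gamma_d$ anticommute if and only if $|\{a,b\}\cap\{c,d\}|=1$---is exactly the statement that the frustration graph of a set of Majorana bilinears is the line graph of the underlying hopping graph. Both directions then follow tautologically from the definition of $L(R)$, as you outline: given $R$, read off $\phi$ from the edge set; given $\phi$, declare the pairs $\{\phi_1(\boldsymbol{j}),\phi_2(\boldsymbol{j})\}$ to be the edges of $R$. Your remarks about the orientation ambiguity being absorbed into the antisymmetry of $\mathbf{h}$, and about injectivity of $\phi$ preventing multi-edges, handle the only bookkeeping subtleties. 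This is essentially the argument of Ref.~\cite{Chapman2020Characterization}; the nontrivial content of that reference lies downstream, in recognizing line graphs via the Beineke forbidden-subgraph characterization and in handling the twin-vertex symmetries, not in the equivalence you have reproduced here.
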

	Upon constructing a free-fermion solution for a given Hamiltonian, we find that $\mathbf{h}$ gives an edge-weighted skew-adjacency matrix of the root graph $R$.
	The graph $R$ may therefore be seen as the Majorana-fermion hopping graph. 
	A full solution for $H$ is found by a linear transformation on the Majorana modes in Eq.~(\ref{eq:ffmap}) to diagonalize the Hermitian matrix $i\mathbf{h}$. 
	Letting the nonzero eigenvalues of $i\mathbf{h}$ be given by $\{\pm \epsilon_j\}_{j = 1}^{\alpha}$ with $\epsilon_j > 0$ for all $j \in \{1, \dots, \alpha\}$, this brings the Hamiltonian to the form in Eq.~(\ref{eq:ffsolvable}) with single-particle energies given by the $\{\epsilon_j\}$.
	
	\begin{table*}
		\centering
		\setcellgapes{3pt}
		\makegapedcells
		\begin{tabular} {c  c  c}
			\toprule
			\multicolumn{2}{c}{With Twins}& \\ \cmidrule{1-2}
			\multicolumn{1}{c}{}& \multicolumn{2}{c}{(Even-hole, Claw)-Free} \\ \cmidrule{2-3}
			\makecell{\includegraphics[width=0.14 \textwidth]{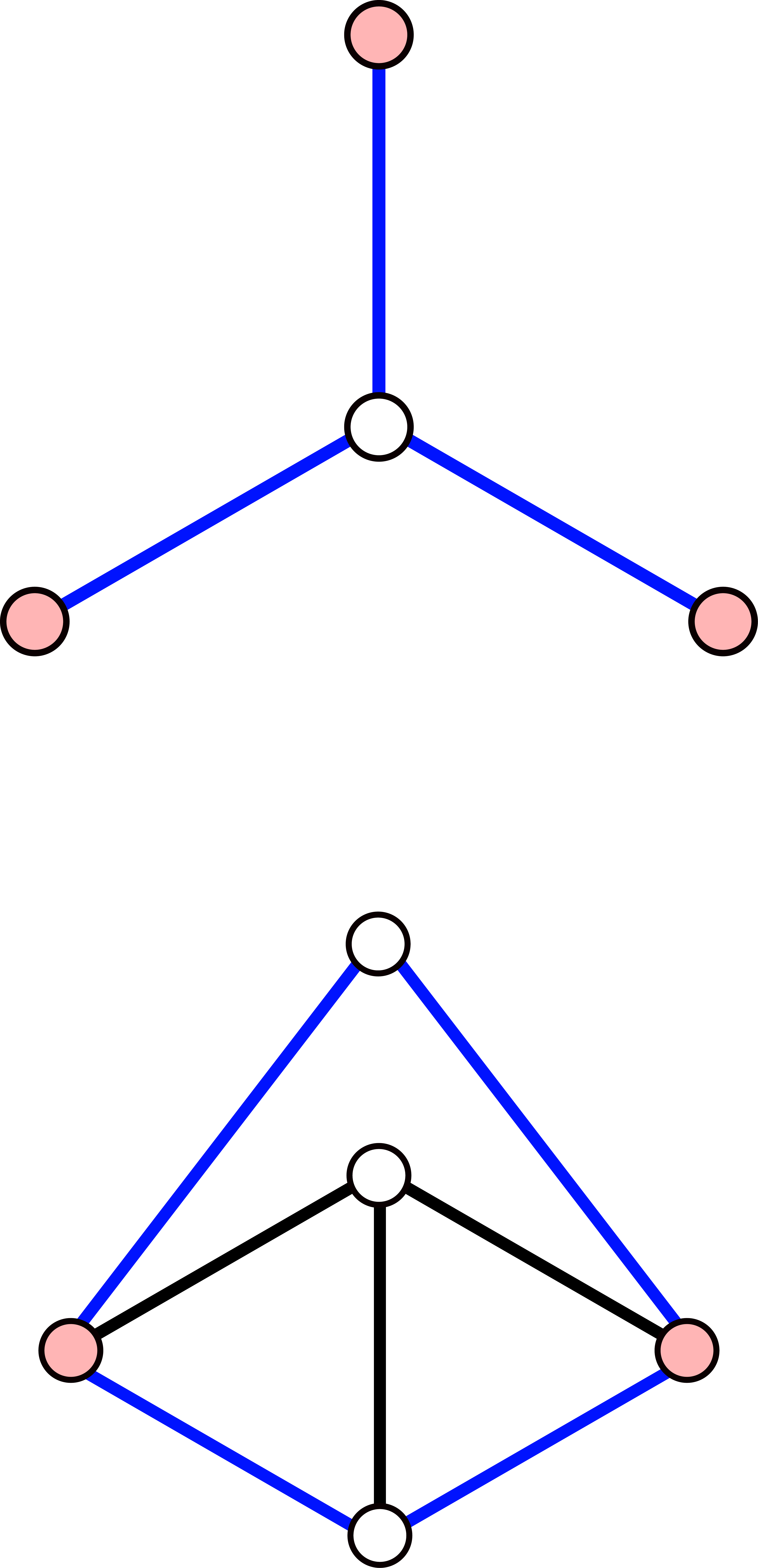}} & 	\makecell{\includegraphics[width=0.14 \textwidth]{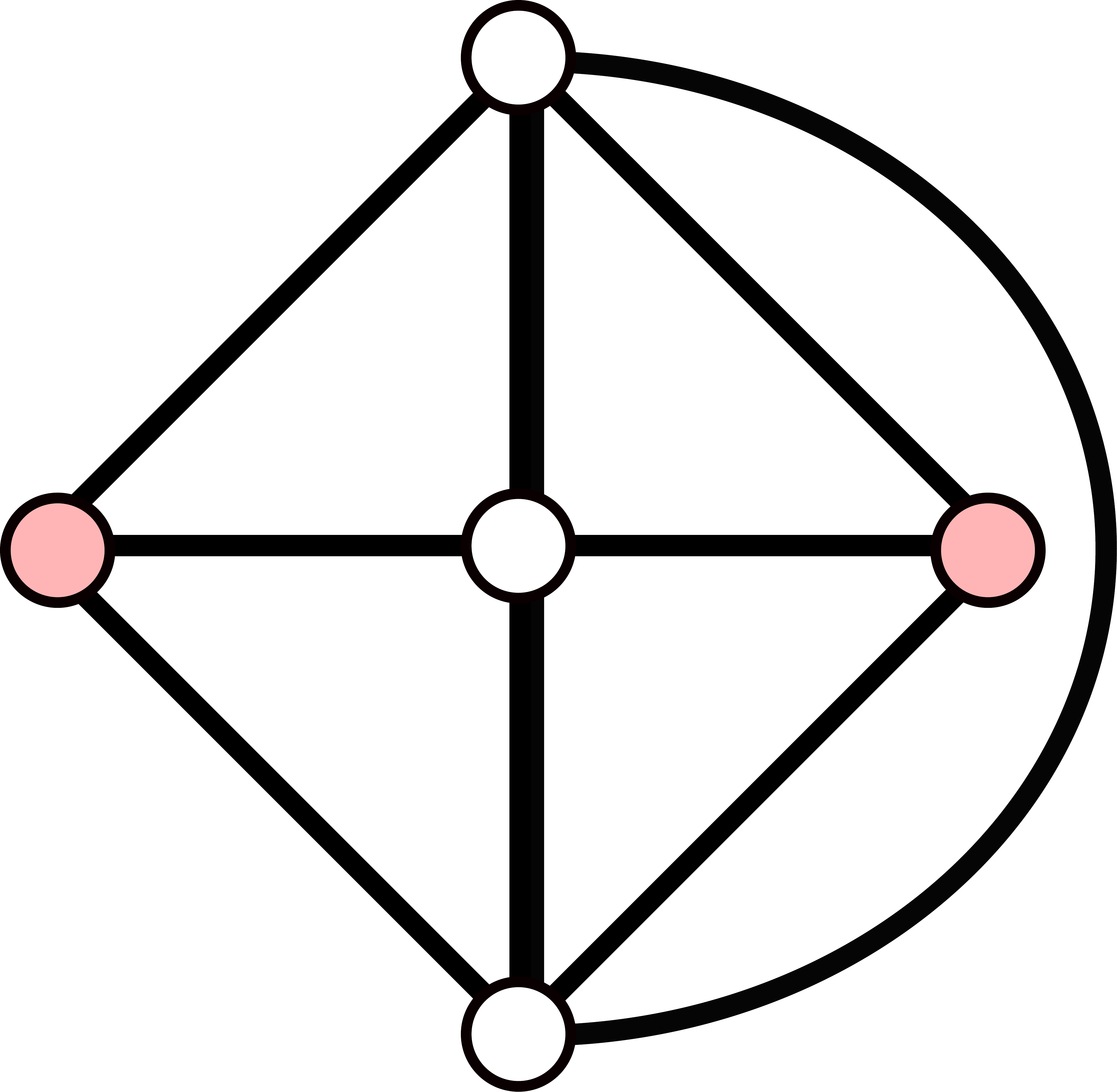}} &
			\makecell{\includegraphics[width=0.62 \textwidth]{LineForbiddenwoTwinsorEvenHoleorClaw}} \\
			\bottomrule
		\end{tabular}
		\caption{
			Nine forbidden induced subgraphs for line graphs. 
			No model containing a subset of terms inducing any of the above frustration graphs admits a generalized Jordan-Wigner solution, unless the global frustration graph contains twin vertices which may then be removed via a symmetry to give a line graph~\cite{Chapman2020Characterization}. 
			Note that we define twin vertices to be non-neighboring, and that all but the claw and the graphs in the rightmost column above contain at least one pair of \emph{neighboring} vertices with the same closed neighborhoods. 
			Vertices from these pairs may be removed through a unitary rotation as described in \Cref{sec:smallsystems}.
			The six-qubit instance of the four-fermion model introduced by Fendley has the frustration graph shown at the very bottom right, and larger instances clearly contain a subset of terms inducing this graph. 
			Surprisingly, each of these graphs describes a standalone model with a free-fermion solution, as the graphs themselves either contain twins or are (even-hole, claw)-free. 
			What is important, therefore, is the precise way that these graphs are connected to a global frustration graph that determines whether or not a generic free-fermion solution is possible. 
			\textbf{Left column:} Forbidden subgraphs which contain twin vertices, highlighted in red, but also contain either a claw or an even-hole, highlighted by blue edges. 
			\textbf{Middle column:} This forbidden subgraph contains twin vertices, but no claws or even holes. Each red highlighted vertex is also simplicial.
			\textbf{Right column:} These graphs do not contain twins, but are (even-hole, claw)-free. 
			Though they contain many simplicial cliques, an example is highlighted for each graph in orange. 
			Importantly, note that the simplicial vertex highlighted in the four-fermion model is necessary for Fendley's exact solution of this model.
		}
		\label{tab:forbidden_lines}
	\end{table*}
	
	Claw-free graphs were originally investigated as natural generalizations to line graphs~\cite{chudnovsky2005structure} and have since developed into the subject of a rich area of study in graph theory \cite{faudree1997clawfree}. 
	Line graphs are claw-free, as no three edges in $R$ can be incident to another edge without at least two of them being incident to each other. 
	Remarkably, the free-fermion solution method presented in this work extends the generalized Jordan-Wigner solution in a way that parallels the relationship between claw-free graphs and line graphs. 
	Specifically, the single-particle energies obtained from Eq.~(\ref{eq:ffmap}) satisfy Eq.~(\ref{eq:spenergycond}) when $L(R)$ is an ECF graph. Every vertex of $R$ corresponds to a simplicial clique of $L(R)$. 
	Clearly, the edges incident to a given vertex in $R$ are mapped to the vertices of a clique in $L(R)$, and since every edge is incident to two vertices in $R$, the open neighborhood of a vertex in $L(R)$ induces two vertex-disjoint cliques ~\cite{krausz1943demonstration}. 
	When a line graph is even-hole-free, the corresponding root graph is even-cycle-free, as any even cycle in $R$ will be mapped to an even hole in $L(R)$.
	Suppose that a given Hamiltonian satisfies Eq.~(\ref{eq:lineequiv}) with $L(R)$ an ECF graph. 
	The single particle energies are zeros of 
	\begin{align}
	f_R(u) = \det\left(i\mathbf{h} - u \mathbf{I}\right) \,.
	\end{align}
	Equivalently, we may consider the reciprocal polynomial $f^*_R$ to $f_R$,
	\begin{align}
	f^*_R(u) \coloneqq u^n f_R(1/u) = (-1)^n \det\left(\mathbf{I} - i u \mathbf{h}\right) \,,
	\end{align}
	where $n$ is the number of vertices in $R$. 
	For an arbitrary simple graph $R$, the characteristic polynomial $f_R$ (and thus $f_R^*$) would only depend on products of elements from $\mathbf{h}$ from matchings and even cycles of $R$ \cite{cavers2012skew}. 
	Since $R$ is an even-cycle-free graph however, only the matchings are relevant. 
	Let $\mathcal{M}_k$ be the set of all $k$-edge matchings $M$ of $R$. 
	We have
	\begin{align}
	f^*_R(u) &= (-1)^n \sum_{k = 0}^{\lfloor n/2 \rfloor} (-u^2)^k \left[\sum_{M \in \mathcal{M}_k} \prod_{(i, j) \in M} h_{ij}^2\right] \label{eq:matchingpoly}\\ 
	f^*_R(u) &= (-1)^n P_{L(R)}(-u^2) \,.
	\end{align}
	The last equality follows because the matchings of a graph correspond to the independent sets of its line graph. 
	Therefore, $\pm \epsilon_j$ are an eigenvalue pair of $i \mathbf{h}$ if and only if Eq.~(\ref{eq:spenergycond}) is satisfied for $G(H) \simeq L(R)$. 
	Though even-hole-free line graphs are a rather limited set of frustration graphs, what is incredibly surprising is that Theorem \ref{thm:maintheorem1} holds when $G(H)$ is relaxed to be a general ECF graph, though there is no fermion-hopping graph $R$ for this set of graphs in general. 
	We remark that for any claw-free graph, the vertex-weighted independence polynomial, $P_G(x)$, is real-rooted for all values of the Hamiltonian couplings by the results given in Refs.~\cite{Leake2016Generalizations, Engstrom07inequalitieson}. 
	This generalizes the result for the un-weighted independence polynomial originally proved by Chudnovsky and Seymour~\cite{Chudnovsky2007Roots}. 
	Since $P_G(x)$ has non-negative coefficients and real roots, all of its roots must be negative by Descartes' rule of signs \cite{descartesgeometrie}, and therefore all of the single-particle energies $\{\epsilon_j\}$ are themselves real. 
	The free-fermion solution method presented here therefore includes systems for which we can prove that no generalized Jordan-Wigner solution is possible, as we shall now see.
	
	Line graphs can be characterized by the set of nine forbidden subgraphs~\cite{beineke1970characterizations}, shown in \cref{tab:forbidden_lines}.
	No model containing a subset of terms inducing any of the frustration graphs in \cref{tab:forbidden_lines} admits a generalized Jordan-Wigner solution. 
	One possible exception is when the Jordan-Wigner mapping is allowed to be non-injective; i.e.~there is a mapping satisfying Eq.~(\ref{eq:phidef}) and preserving the frustration graph which takes multiple Pauli terms to the same fermionic pair. 
	These Pauli terms must then correspond to \emph{twin vertices} in $G(H)$: vertices with identical open neighborhoods, $\Gamma(v)$. 
	Notice that twin vertices are never neighboring by this definition, as a vertex is not included in its own open neighborhood.%
	\footnote{Here we caution the reader that this definition differs slightly from that used in the graph-theory literature, where vertices with identical closed neighborhoods (which are therefore neighboring) are also referred to as twins. 
		We will return to pairs of vertices with identical closed neighborhoods in \Cref{sec:smallsystems}.}
	Since operators corresponding to twin vertices anticommute with the same set of operators in the Hamiltonian, the product of any pair of such operators commutes with every term in the Hamiltonian and so constitutes a symmetry. 
	We can thus project onto the eigenspace of this symmetry operator to replace one operator in the set of twins with another, thus removing its vertex from the frustration graph. 
	If twins can be removed in such a way as to change the frustration graph into a line graph, then the Hamiltonian is still solvable via Jordan-Wigner. 
	This will sometimes be possible, as some of the forbidden subgraphs in \cref{tab:forbidden_lines} themselves contain twins. 
	From \cref{tab:forbidden_lines}, we see that all forbidden subgraphs for line graphs either contain twins, are simplicial ECF, or both. 
	They therefore surprisingly all have a free spectrum, and it is truly how these graphs are connected to one another that allows us to infer the existence of a free-fermion solution. 
	
	The class of ECF graphs is generalized by the set of so-called (even-hole, pan)-free graphs~\cite{Cameron2017On}. 
	A \emph{pan} is a graph consisting of a hole together with an additional vertex with exactly one neighbor on the hole. 
	A pan contains a claw as an induced subgraph, and so an (even-hole, pan)-free graph is necessarily ECF.
	The structure of (even-hole, pan)-free graphs has been completely characterized, and this allows the authors of Ref.~\cite{Cameron2017On} to give an $O(mn)$-time algorithm for recognizing them, where $m$ is the number of edges in the graph and $n$ is the number of vertices. 
	Specifically, the authors of Ref.~\cite{Cameron2017On} show that (even-hole, pan)-free graphs either: (i) have a clique cutset, (ii) are unit circular-arc graphs, (iii) are a clique, (iv) are the join of a clique and a unit circular-arc graphs. 
	A unit circular-arc graph is one whose vertices correspond to distinct arcs of unit length on a circle, such that vertices are neighboring if and only if their corresponding arcs intersect. 
	A clique cutset is a subset of vertices inducing a clique whose removal disconnects the graph. 
	The join of two graphs $G_1 \coloneqq (V_1, E_1)$ and $G_2 \coloneqq (V_2, E_2)$ is the graph with vertex-set $V_1 \cup V_2$ and edge-set $E_1 \cup E_2 \cup \{(u, v)| u \in V_1, v \in V_2\}$. 
	Though the theorem of Ref.~\cite{Cameron2017On} completely describes the structure of these graphs, we can intuitively expect that the coarse topology of these models is fundamentally one-dimensional or treelike. 
	
	As general claw-free graphs can also be recognized efficiently~\cite{Kloks2000Finding}, we have an efficient algorithm for recognizing ECF graphs. 
	In Ref.~\cite{Chudnovsky2012Growing}, a polynomial time algorithm is given for detecting whether a general claw-free graph contains a simplicial clique.
	It is therefore computationally efficient to recognize a simplicial clique in an ECF graph. 
	Moreover, every (nonempty) ECF graph has at least one simplicial clique~\cite{CSSS}. 
	
	\section{Proofs of Main Results}
	\label{sec:technical}
	In this section we prove the two main results presented in~\Cref{sec:informal}. 
	We restate these theorems here for convenience.
	The first main result tells us that an ECF model will have a free spectrum, of the form \cref{eq:free_energy}, and provides an explicit form of the single-particle energies. 
	\begin{customthm}{1}
		\label{thm:theorem1}
		{\upshape \textbf{(Restatement.)}} Every ECF Hamiltonian $H$ has a free spectrum of the form in \cref{eq:free_energy}. 
		In particular, the single-particle energies $\{\epsilon_j\}_{j=1}^{\alpha(G)}$ satisfy
		\begin{align}
		P_G\bigl(-1/\epsilon_j^2\bigr) = 0 \,,
		\label{eq:spenergycond2}
		\end{align} 
		where $P_G(x)$ is the vertex-weighted independence polynomial of the frustration graph $G(H)$,
		\begin{align}
		P_G(x) \coloneqq \sum_{k = 0}^{\alpha(G)} \sum_{S \in \mathcal{S}^{(k)}} \left(\prod_{\boldsymbol{j} \in S} b_{\boldsymbol{j}}^2\right) x^{k} \,.
		\end{align}
		$\mathcal{S}^{(k)}$ is the set of $k$-vertex independent sets of $G(H)$, and $\alpha(G)$ is the independence number of $G(H)$.
	\end{customthm}
	
	The second main result gives an explicit realization of the canonical modes of an ECF model in terms of independent sets of Hamiltonian terms, and the simplicial mode, $\chi$.
	\begin{customthm}{2}
		\label{thm:theorem2}
		{\upshape \textbf{(Restatement.)}} An ECF Hamiltonian $H$ is free-fermion-solvable via \cref{eq:ffsolvable} with eigenmodes given by its incognito modes.
	\end{customthm}
	Recall that the incognito modes are defined in terms of the simplicial mode, $\chi$, as
	\begin{align}
	\psi_j = N_j^{-1} T_G (-u_j) \chi T_G(u_j)\,, \qquad j \in \{1, \dots, \alpha(G) \} \,,
	\end{align}
	where $u_j \coloneqq 1/\epsilon_j$ for the single-particle energy $\epsilon_j$ satisfying \cref{eq:spenergycond2}, $T_G(u)$ is a transfer operator
	\begin{align}
	T_{G} (u) \coloneqq \sum_{j = 0}^{\alpha(G)} (-u)^j Q^{(j)} \,,
	\label{eq:transfer_op_restate}
	\end{align}
	and $N_j^{-1}$ is a normalization factor which is computable.
	
	\begin{figure}[t!]
		\centering
		\includegraphics[width=0.9\textwidth]{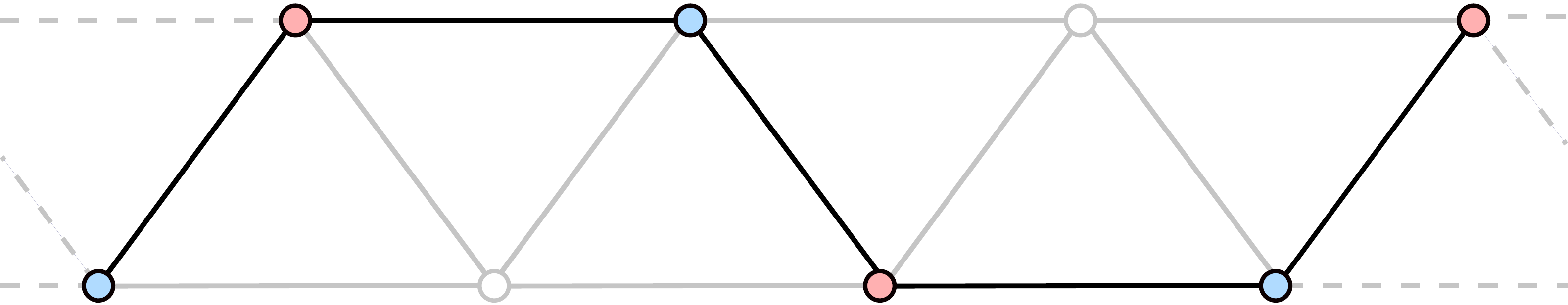}
		\caption{The frustration graph of the model introduced in Ref.~\protect\cite{fendley2019free}, which is (even-hole, claw)-free. Each of the sets of colored (red and blue) vertices are independent sets, and together they induce a path in the frustration graph. In a general claw-free graph, the symmetric difference of any pair of independent sets induces a bipartite subgraph of maximum degree two: all of the components of the subgraph are induced paths and even-holes. If the graph is furthermore even-hole free, then the symmetric difference induces a set of disjoint paths.}
		\label{fig:fourfermionindset}
	\end{figure}
	
	We proceed by making successively more restrictive assumptions on $G(H)$: first that it is claw-free, then (even-hole, claw)-free. 
	We begin by proving the following lemma, regarding claw-free Hamiltonians and the independent-set charges.
	\begin{lemma}
		\label{lem:conserved_charges}
		Given a Hamiltonian with claw-free frustration graph $G(H)$, the independent-set charges are mutually commuting:
		\begin{equation}	\big[Q^{(r)},Q^{(s)}\big]=0 \,, \qquad \forall \ r,s \in \{1, \dots, \alpha(G) \}.
		\label{eq:chargecomm}
		\end{equation}
	\end{lemma}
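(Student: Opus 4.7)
The plan is to expand the commutator as a double sum over pairs of independent sets and cancel contributions via a sign-reversing involution based on the structure of $G[S \triangle T]$. Writing $[Q^{(r)}, Q^{(s)}] = \sum_{S,T} [h_S, h_T]$, where $S$ and $T$ range over independent sets of sizes $r$ and $s$ respectively, we can use the fact that each $h_S$ is a well-defined (ordering-independent) product since $S$ is independent. Then $h_S$ and $h_T$ either commute or anticommute, and the commutator $[h_S,h_T] = 2 h_S h_T$ is nonzero exactly when the number of anticommuting pairs $(v,w)\in(S\setminus T)\times(T\setminus S)$ — equivalently, the number of edges of $G$ between $S\setminus T$ and $T\setminus S$ — is odd (pairs involving $S\cap T$ all commute because they live inside one of the two independent sets).

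The first key step is a structural observation about $G[S \triangle T]$. Claw-freeness forces this subgraph to have maximum degree two, since a vertex $v\in S\setminus T$ with three neighbors in the independent set $T\setminus S$ would induce a claw on $v$ and those three mutually non-adjacent neighbors. Hence $G[S \triangle T]$ is a disjoint union of induced paths and even cycles (the bipartition $(S\setminus T, T\setminus S)$ forbids odd cycles). The parity of the total edge count equals the parity of the number of odd-length path components, since even cycles and even-length paths contribute evenly. Moreover, an odd-length path in $G[S\triangle T]$ has endpoints in different classes of the bipartition, and therefore has equally many vertices in $S\setminus T$ and in $T\setminus S$.

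The heart of the argument is the involution. Fix any linear order on $V$ and, for each pair $(S,T)$ with $|E(S\setminus T, T\setminus S)|$ odd, let $P$ be the odd-length path component of $G[S\triangle T]$ that contains the minimal vertex among all odd-length path components. Define $(S',T')$ by swapping the $S$- and $T$-memberships of the vertices along $P$; since $P$ has equally many vertices in each class, $|S'|=r$ and $|T'|=s$. Because $S'\triangle T' = S\triangle T$ as vertex sets and the induced subgraphs therefore coincide, the distinguished component $P$ remains the same after flipping, so applying the operation twice recovers $(S,T)$; this is a fixed-point-free involution on the collection of non-commuting pairs.

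The final step is a sign computation. Decomposing $S\setminus T = A\sqcup B$ and $T\setminus S = C\sqcup D$, with $A,C$ the vertices on $P$ and $B,D$ the remainder, one finds $h_S h_T = \bigl(\prod_{v\in S\cap T} b_v^2\bigr) h_B h_A h_D h_C$ and $h_{S'} h_{T'} = \bigl(\prod_{v\in S\cap T} b_v^2\bigr) h_B h_C h_D h_A$. Reordering the second expression to match the first picks up a sign $(-1)^{e_{AC}+e_{AD}+e_{CD}}$, where $e_{XY}$ counts edges in $G$ between $X$ and $Y$. Since $P$ is a connected component of $G[S\triangle T]$, there are no edges in $G$ from $A\cup C$ to $D$ (both sides lie in $S\triangle T$), giving $e_{AD}=e_{CD}=0$, while $e_{AC}$ equals the number of edges of the odd-length path $P$ and is therefore odd. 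Hence $h_S h_T = -h_{S'} h_{T'}$, so $[h_S,h_T]+[h_{S'},h_{T'}] = 0$, and pairing shows the full commutator vanishes. The main obstacle is establishing that claw-freeness really delivers the maximum-degree-two structure and that the chosen component is preserved by the flip; the sign computation itself is delicate but reduces to the elementary observation that component boundaries have no outgoing $G[S\triangle T]$-edges.
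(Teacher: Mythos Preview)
Your proof is correct and follows essentially the same approach as the paper's: both expand the commutator over pairs of independent sets, use claw-freeness to show $G[S\triangle T]$ has maximum degree two (hence decomposes into paths and even cycles), identify an odd-length path component, and cancel terms by swapping the $S$- and $T$-parts along that path. The only cosmetic difference is that the paper picks an arbitrary odd path and argues that the $2^N$ pairs obtained by flipping subsets of the odd paths cancel pairwise, whereas you make the involution canonical by selecting the odd path containing the minimal vertex and give the sign computation in full detail; these are equivalent presentations of the same cancellation.
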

	
	\begin{proof}
		We may assume $r \neq s$, since \cref{eq:chargecomm} clearly holds if $r$ and $s$ are equal. For a given independent set $S$, define
		\begin{align}
		h_S \coloneqq \prod_{\boldsymbol{j} \in S} h_{\boldsymbol{j}} \,.
		\end{align}
		and notice that, since operators belonging to an independent set in $G(H)$ are commuting, the order in which we take the product is unimportant in this definition. For two independent sets $S$, $S'$ of a claw-free graph
		\begin{align}
		[h_S, h_{S'}] =
		\begin{cases}
		\pm 2 \left(\prod_{\boldsymbol{j} \in S \cap S'} b_{\boldsymbol{j}}^2 \right) \prod_{\boldsymbol{j} \in S \oplus S'} h_{\boldsymbol{j}} & |E_{S \oplus S'}| \ \mathrm{odd} \\
		0 & |E_{S \oplus S'}| \ \mathrm{even} 
		\end{cases}
		\label{eq:hshs}
		\end{align}
		where $S \oplus S' \coloneqq (S \cup S')\backslash (S \cap S')$, the symmetric difference of $S$ with $S'$.
		When it is not empty, the graph $G[S \oplus S']$ is bipartite, since $S$ and $S'$ are both independent sets.
		Commuting $h_S$ through $h_{S'}$ thus gives a factor of $-1$ for every edge in this graph, and so Eq.~(\ref{eq:hshs}) holds. From here, we naturally restrict to the case where $|E_{S \oplus S'}|$ is odd.
		
		As $G$ is claw-free, we must furthermore have that every vertex of $G[S \oplus S']$ has degree at most two in this graph, since once again, $G[S \oplus S']$ is bipartite.
		Every component of $G[S \oplus S']$ is therefore either an isolated vertex, path, or even cycle (odd cycles are not bipartite).
		We have assumed that $G[S \oplus S']$ has odd-many edges, and so this graph must have an odd number (and thus at least one) of odd-length-path components.
		Such paths have the same number of vertices from both $S$ and $S'$ and so cannot be the only component of $G[S \oplus S']$, since we have assumed $r \neq s$.
		Pick one such odd path, $L \subseteq V$, and note that
		\begin{align}
		\{h_{S \cap L},h_{S' \cap L}\} = 0.
		\end{align} 
		Since $G[L]$ has the same number of vertices from both $S$ and $S'$, we can exchange the subsets $S \cap L$ and $S' \cap L$ between $S$ and $S'$, respectively, to obtain a new unique pair of independent sets without changing the number of vertices in either, while also preserving the sets $S \cap S'$ and $S \oplus S'$. 
		This gives
		\begin{align}
		& [h_{S/L}h_{S \cap L}, h_{S'/L} h_{S' \cap L}] = -[h_{S/L} h_{S' \cap L}, h_{S'/L} h_{S \cap L}] \,,
		\end{align}
		and so these terms cancel in the commutator $[Q^{(r)}, Q^{(s)}]$. Letting $N$ be the number of odd-length-path components in $G[S \oplus S']$, there are are $2^{N}$ pairs of independent sets $(S, S')$, related by these exchanges, for which the graph $G[S \oplus S']$ is fixed. The contributions to the commutator $[Q^{(r)}, Q^{(s)}]$ from each $(S, S')$ therefore cancel pairwise,  and we have $[Q^{(r)}, Q^{(s)}] = 0$ for all $r$ and $s$.\qed
	\end{proof}

	\Cref{lem:conserved_charges} implies that all claw-free models have a set of conserved quantities whose size generally grows with system size, since $Q^{(1)}\coloneqq H$.
	Thus, we can conclude that, in the traditional sense, claw-free models are \textit{integrable}.
	Consider as an example the frustration graph of the model introduced in Ref.~\protect\cite{fendley2019free}, as shown in \cref{fig:fourfermionindset}. The graph is always claw-free, although contains even holes when the model has periodic boundary conditions.
	Two independent sets of vertices (highlighted in red and blue) induce a path in the frustration graph. In a general claw-free graph, the symmetric difference of any pair of independent sets induces a bipartite subgraph of maximum degree two: all of the components of the subgraph are induced paths and even-holes.
	\cref{lem:conserved_charges} also implies that the transfer operator, $T_G(u)$, defined in \cref{eq:transfer_op_restate}, will commute with the Hamiltonian.

	Next consider the following lemma regarding the transfer operators, $T_G(u)$, of even-hole-free models.
	
	\begin{lemma}
		\label{lem:transfer_poly}
		If $G$ is an (even-hole, claw)-free graph, the transfer matrix, $T_G(u)$, satisfies
		\begin{equation}
		T_G(u)T_G(-u)=P_G(-u^2)
		\label{eq:lemma2restate}
		\end{equation}
		where $P_G$ is the vertex-weighted independence polynomial, defined in~\cref{eq:vertex_weighted_poly}.
	\end{lemma}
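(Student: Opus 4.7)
The plan is to expand $T_G(u)T_G(-u)$ as an operator-valued polynomial in $u$ and match coefficients against those of $P_G(-u^2)$. Since the charges $\{Q^{(j)}\}$ commute pairwise by \cref{lem:conserved_charges}, the operators $T_G(u)$ and $T_G(-u)$ commute, so their product is invariant under $u \mapsto -u$ and all odd powers of $u$ vanish automatically. The task reduces to showing that the coefficient of $u^{2m}$ equals $(-1)^m \sum_{S \in \mathcal{S}^{(m)}} \bigl(\prod_{\mathbf{j} \in S} b_{\mathbf{j}}^2\bigr)\cdot I$, which is exactly the coefficient of $u^{2m}$ in $P_G(-u^2)$.

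I would start from the raw expansion
\begin{equation*}
C_{2m} \coloneqq \sum_{j+k=2m}(-1)^j \sum_{\substack{S\in\mathcal{S}^{(j)}\\ S'\in\mathcal{S}^{(k)}}} h_S h_{S'}
\end{equation*}
and reindex each pair $(S,S')$ by the invariants $U \coloneqq S \cap S'$ and $T \coloneqq S \oplus S'$. Since elements of an independent set mutually commute, one has the clean factorization $h_S h_{S'} = \bigl(\prod_{\mathbf{j}\in U} b_{\mathbf{j}}^2\bigr)\, h_{S\setminus U}\, h_{S'\setminus U}$, and the sub-pairs $(S\setminus U,\, S'\setminus U)$ range precisely over proper 2-colorings $(A,B)$ of $G[T]$ with $U$ an independent set satisfying $T\subseteq V\setminus N[U]$. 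The contribution from $T=\emptyset$ forces $|U|=m$ and yields exactly $(-1)^m \sum_{U\in\mathcal{S}^{(m)}} \bigl(\prod_{\mathbf{j}\in U} b_{\mathbf{j}}^2\bigr) I$, which already matches the target. Everything therefore reduces to showing that every nonempty $T$ contributes zero.

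The structural heart of the argument is exactly where the ECF hypothesis enters. As in the proof of \cref{lem:conserved_charges}, $G[T]$ is bipartite and has maximum degree two by claw-freeness; the even-hole-free hypothesis then forces every connected component of $G[T]$ to be either an isolated vertex or a path. Distinct components of $G[T]$ are non-adjacent in $G$, so their associated operators commute, and the sum over 2-colorings of $G[T]$ factorizes as a product over components. It then suffices to show that each component type contributes zero in isolation: isolated vertices give $-h_v + h_v = 0$; even-vertex (odd-edge) paths give $\pm\{h_A,h_B\}=0$ by the anticommutation of the two-color halves established in the proof of \cref{lem:conserved_charges}; and odd-vertex (even-edge) paths give $\pm[h_A,h_B]=0$ by the corresponding commutation. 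A single zero factor kills the component product, so every $T\neq\emptyset$ contributes zero.

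The main obstacle is careful combinatorial bookkeeping: verifying that $(S,S')\mapsto(U,T,A,B)$ is a bijection onto the intended index set, that the sign $(-1)^{|S|}$ correctly splits as $(-1)^{|U|}(-1)^{|A|}$, and that the product structure across components is an exact equality rather than merely up to sign. The necessity of even-hole-freeness is isolated and transparent here: a $2k$-cycle component would have even edge count and $|A|=|B|=k$, contributing $\pm 2\, h_A h_B$ rather than zero, which is precisely the obstruction that forbidding even holes removes.
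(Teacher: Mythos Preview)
Your proof is correct and follows essentially the same approach as the paper: both reduce to analyzing $G[S\oplus S']$, use claw-freeness to bound its degree, use even-hole-freeness to force its components to be paths, and then cancel the two 2-colorings of each path against one another. Your explicit reindexing by $(U,T,A,B)$ and clean factorization over components is a tidier packaging of what the paper does via a sequence of pairwise-swap cancellation arguments (first odd-edge paths, then even-edge paths including isolated vertices), but the combinatorial content is identical.
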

	
	\begin{proof}
		Let $G$ be an ECF graph. Using \cref{eq:transfer_op_restate} we have
		\begin{align}
		T_G T_G^- &= \sum_{s, t = 0}^{\alpha(G)} (-1)^s u^{s + t} Q^{(s)} Q^{(t)},
		\end{align}
		where we have used the abbreviated notation  $T_G(-u)\coloneqq T_G^-$.
		If $s$ and $t$ have opposite parity, then $Q^{(s)}Q^{(t)}$ and $Q^{(t)}Q^{(s)}$ have a relative minus sign in the sum, and so these terms vanish in the sum since $Q^{(s)}$ and $Q^{(t)}$ commute. 
		
		Thus we need only consider terms for which $s$ and $t$ have the same parity
		\begin{align}
		T_G T_G^- &= \sum_{\substack{s, t = 0 \\ s + t \ \mathrm{even}}}^{\alpha(G)} (-1)^s u^{s + t} Q^{(s)}Q^{(t)},
		\end{align}
		By expanding the $Q^{(k)}$ in terms of independent sets, $h_S$, we can write
		\begin{align}
		T_G T_G^- &= \sum_{\substack{s, t = 0 \\ s + t \ \mathrm{even}}}^{\alpha(G)} (-1)^s u^{s + t}\sum_{\substack{S \in \mathcal{S}^{(s)} \\ S' \in \mathcal{S}^{(t)} \\ |E_{S \oplus S'}| \ \mathrm{even}}} \left(\prod_{\boldsymbol{j} \in S \cap S'} b_{\boldsymbol{j}}^2\right)h_{S \cap (S \oplus S')} h_{S' \cap (S \oplus S')}.
		\end{align}
		The constraint that $s + t$ is even implies that the number of vertices $|V_{S \oplus S'}|$ is even, and we require that $|E_{S \oplus S'}|$ be even because the operators $h_S$ and $h_{S'}$ will anticommute otherwise and cancel in the sum over $S$, $S'$. It thus suffices to consider induced subgraphs, $G[S \oplus S']$, with even-many edges and even-many vertices.
		Once again, such graphs must be bipartite and, furthermore, must be a union of disjoint isolated vertices, paths, and even cycles. 
		
		By a similar argument as above, we will show that the contributions from any such graphs containing an odd-length path will cancel in the sum.
		Assume that $G[S \oplus S']$ does contain an odd-length path $L$.
		Since $|E_{S \oplus S'}|$ must be even, $L$ cannot be the only component of $G[S \oplus S']$, and in fact one of the additional components of $G[S \oplus S']$ must be another odd-length path (otherwise the total number of edges in $G[S \oplus S']$ cannot be made even).
		Exchanging $S \cap L$ and $S' \cap L$ between $S$ and $S'$ gives another pair of distinct independent sets for the same $s$, $t$, $S \cap S'$, and $S \oplus S'$, but for which the operator $h_{S \cap (S \oplus S')} h_{S' \cap (S \oplus S')}$ appears with a minus sign in the sum and cancels the term corresponding to $S$ and $S'$.
		The contributions from $G[S \oplus S']$ therefore cancel pairwise in this case.
		
		Next, we will show that contributions from any such graphs containing an even-length path will cancel in the sum, and therefore non-vanishing contributions must come from graphs containing no paths at all.
		Once again, assume $G[S \oplus S']$ does contain such a path $L$ of even length (which may be an isolated vertex, i.e. a path of length zero).
		Since $L$ has an odd number of vertices, $L$ cannot be the only component of $G[S \oplus S']$ and in fact one of the additional components of $G[S \oplus S']$ must be another even-length path (otherwise the total number of vertices in $G[S \oplus S']$ cannot be made even).
		Both of the endpoints of $L$ must belong to the same independent set, either $S$ or $S'$.
		If $L$ is an isolated vertex, then it trivially belongs to the same independent set as itself.
		Exchanging $S \cap L$ and $S' \cap L$ between $S$ and $S'$ in this case gives another pair of distinct independent sets for the same value of $s + t$, $S \cap S'$, and $S \oplus S'$, for which the operator $h_{S \cap (S \oplus S')} h_{S' \cap (S \oplus S')}$ appears with the same sign in the sum since $L$ has even-many edges. Both of the parities of $s$ and $t$ are changed in this exchange, and so $s$ and $t$ have the same parity still, but this term appears with an overall relative minus sign in the sum due to the factor of $(-1)^s$.
		This therefore cancels the term corresponding to $S$ and $S'$, and so the contributions from $G[S \oplus S']$ cancel pairwise.
		
		The only allowed graphs $G[S \oplus S']$ whose term in the sum is not canceled by something else are those for which $G[S \oplus S']$ consists of a set of disconnected even cycles. However, we have assumed that $G$ is even-hole free. Therefore, these contributions do not appear, and we will have
		\begin{align}
		T_GT_G^- = P_G(-u^2),
		\label{eq:ehfidentity}
		\end{align}
		if there are no even holes in $G$.
		\qed	
	\end{proof}

	Note that $P_G$ has strictly positive coefficients, which do not depend on the signs of the Hamiltonian coefficients $\{b_{\boldsymbol{j}}\}_{\boldsymbol{j} \in V}$.
	Thus, as discussed in \cref{sec:symmetries}, $P_G(-x)$ will have all positive roots, denoted by $x\coloneqq u_{\ell}^2$.
	
	We next consider the commutation of the incognito modes, $\psi_\ell$, with the Hamiltonian. 
	Here we further use the fact that an ECF graph, $G(H)$, contains a simplicial clique, $K_s$.
	Recall that the simplicial mode, $\chi$, commutes with all terms in the Hamiltonian outside of $K_s$, but anticommutes with all terms in $K_s$, $\{\chi,h_v\}=0$ for all $v\in K_s$.
	Thus, we can write the commutation of $T_G\chi T_G^-$ with the Hamiltonian for arbitrary $u$ as
	\begin{equation}
	[H,T_G(u)\chi T_G(-u)]=2\sum_{v\in K_s}T_G(u)h_v\chi T_G(-u).
	\label{eq:Hcommchi}
	\end{equation}
	For an ECH model, we require that when $u=-u_\ell$, the right-hand-side of \cref{eq:Hcommchi} is equal to~$2\epsilon_\ell\psi_{\ell}$, where $1/\epsilon_\ell\coloneqq u_\ell$ (similarly when $u=u_\ell$, it is equal to~$-2\epsilon_\ell\psid_{\ell}$). 
	A crucial step for proving this is the following lemma:
	
	\begin{lemma}
		\label{lem:simplicialTcomm}
		Let $K_s$ be a simplicial clique in $G(H)$, and let $\chi$ be a simplicial mode, as defined in~\cref{def:incognito}, then
		\begin{equation}
		T_G \left(1 + u \sum_{v\in K_s}h_v\right) \chi T^-_G = P_G(-u^2) \left(1 - u \sum_{v\in K_s}h_v\right) \chi.
		\label{eq:fundidentity}
		\end{equation}
	\end{lemma}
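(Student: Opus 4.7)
The plan is to reduce \cref{lem:simplicialTcomm} to a $\chi$-free operator identity, then further reduce to a cleaner factorization which follows from \cref{lem:transfer_poly} combined with a cancellation driven by the simplicial structure of $K_s$. Since $\chi$ anticommutes with $h_v$ for every $v\in K_s$ and commutes with every other Pauli term of $H$, and since $|S\cap K_s|\le 1$ for every independent set $S$ of $G$ (as $K_s$ is a clique), one has $\chi h_S=(-1)^{|S\cap K_s|}h_S\chi$. This motivates the ``$K_s$-twisted'' transfer operator
\begin{align*}
\tilde T_G(u)\coloneqq\sum_S(-1)^{|S\cap K_s|}(-u)^{|S|}h_S,
\end{align*}
for which $T_G(u)\chi=\chi\tilde T_G(u)$. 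Combining this with $(1+uH_{K_s})\chi=\chi(1-uH_{K_s})$ and cancelling a common left factor of $\chi$, the statement of \cref{lem:simplicialTcomm} reduces to the $\chi$-free identity $\tilde T_G(u)(1-uH_{K_s})T_G(-u)=P_G(-u^2)(1+uH_{K_s})$. By \cref{lem:transfer_poly}, this follows at once from the cleaner factorization
\begin{align*}
\tilde T_G(u)(1-uH_{K_s})=(1+uH_{K_s})T_G(u),\qquad(\star)
\end{align*}
upon right-multiplying $(\star)$ by $T_G(-u)$ and applying $T_G(u)T_G(-u)=P_G(-u^2)$.

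To prove $(\star)$, I would reorganize it as $\tilde T_G(u)-T_G(u)=u[\tilde T_G(u)H_{K_s}+H_{K_s}T_G(u)]$. The left-hand side is transparent: only independent sets with $|S\cap K_s|=1$ contribute to the difference, and grouping them by the unique $w\in S\cap K_s$ gives $\tilde T_G(u)-T_G(u)=2u\sum_{w\in K_s}h_wT_{G-\Gamma[w]}(u)$. For the right-hand side I would apply the standard recursion $T_G(u)=T_{G-w}(u)-uh_wT_{G-\Gamma[w]}(u)$ and its twisted analog to each $w\in K_s$, then partition the independent sets of $G-w$ by their intersections with $K_s\setminus w$ and with $K_w\setminus w$ (the ``other'' clique at $w$, which is a clique precisely because $K_s$ is simplicial). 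After the $h_w$ pass-through the $uh_wh_wT_{G-\Gamma[w]}(u)$ contributions cancel and one obtains
\begin{align*}
\tilde T_G(u)h_w+h_wT_G(u)=2h_wT_{G-\Gamma[w]}(u)+2h_w T_{\mathcal{A}^{(w)}}(u),
\end{align*}
where $T_{\mathcal{A}^{(w)}}(u)$ sums $(-u)^{|S|}h_S$ over independent sets of $G-w$ meeting $K_s\setminus w$ in a single vertex and avoiding $K_w\setminus w$.

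The final step is to show that the residual $\sum_{w\in K_s}h_wT_{\mathcal{A}^{(w)}}(u)$ vanishes. Each term decomposes as $-u\sum_{w'\in K_s\setminus w}h_wh_{w'}T_{U(w,w')}(u)$, where $U(w,w')=V\setminus(K_s\cup(K_w\setminus w)\cup(K_{w'}\setminus w'))$. Using the simpliciality identity $\Gamma[w']=K_s\cup(K_{w'}\setminus w')$, this set is manifestly symmetric in $w$ and $w'$, so $T_{U(w,w')}(u)=T_{U(w',w)}(u)$; moreover it commutes with both $h_w$ and $h_{w'}$ because $U(w,w')\subseteq V\setminus(\Gamma[w]\cup\Gamma[w'])$. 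The anticommutation $h_wh_{w'}+h_{w'}h_w=0$ for distinct members of the clique $K_s$ then pairs the $(w,w')$ and $(w',w)$ contributions into zero. The main obstacle I expect is identifying the reformulation $(\star)$ itself, since it is stronger and more transparent than \cref{lem:simplicialTcomm} and, notably, does not require the even-hole-free hypothesis (that hypothesis enters only through the appeal to \cref{lem:transfer_poly} when right-multiplying by $T_G(-u)$); once $(\star)$ is found, the rest is a short algebraic computation together with the pairing cancellation above.
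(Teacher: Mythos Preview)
Your argument is correct. The reduction to the $\chi$-free identity via the twisted transfer operator $\tilde T_G$ is valid, the factorization $(\star)$ does imply the lemma after right-multiplication by $T_G(-u)$ and an appeal to \cref{lem:transfer_poly}, and your pairing cancellation for $\sum_{w\in K_s} h_w T_{\mathcal{A}^{(w)}}(u)$ goes through exactly as written: the set $U(w,w')$ is symmetric in $w,w'$, lies outside $\Gamma[w]\cup\Gamma[w']$, and $h_w h_{w'}+h_{w'}h_w=0$ for distinct $w,w'\in K_s$.

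Your route is genuinely different from the paper's, and considerably shorter. The paper proceeds by expanding $T_G\chi T_G^-$ and each $T_G h_v\chi T_G^-$ separately via the clique recursions \cref{eq:TKrecurrence,eq:TKrecurrence2,eq:PKrecurrence} and the auxiliary identity \cref{eq:clique-decomp}, combines them into \cref{eq:lem3proofwDelta}, and then spends roughly a page (\cref{eq:lem3proofline1}--the final $\Delta=0$) reducing the residual $\Delta$ to zero by repeated substitution of these recursions and \cref{lem:transfer_poly}. Your approach instead isolates the intermediate identity $(\star)$, which is both stronger than the lemma (as you note, it holds whenever $K_s$ is simplicial, with no even-hole-free hypothesis) and admits a direct combinatorial proof. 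The underlying cancellation mechanisms are actually the same: your vanishing of $\sum_{w}h_w T_{\mathcal{A}^{(w)}}$ is precisely the statement that the two clique recursions \cref{eq:TKrecurrence} and \cref{eq:TKrecurrence2} agree when applied at $K_s$, which the paper also proves by the $(v,w)\leftrightarrow(w,v)$ antisymmetry. What your approach buys is that once $(\star)$ is in hand, a single application of \cref{lem:transfer_poly} finishes the proof, whereas the paper invokes \cref{lem:transfer_poly} (and its consequence \cref{eq:clique-decomp}) many times inside the $\Delta$-computation. What the paper's approach buys is that it never requires discovering the twisted object $\tilde T_G$ or the factorization $(\star)$; it is a more mechanical, if much longer, expansion.
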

	\begin{proof}
		We first express important recurrence relations for both $T_G$ and $P_G$.
		For any clique $K\subseteq G$ we have
		\begin{align}
		T_G =& T_{G - K} - u \sum_{v \in K} h_v T_{G-\Gamma[v]}. 
		\label{eq:TKrecurrence}
		\end{align}
		This follows from the fact that independent sets of $G$ can be partitioned into two groups: (i) sets which do not contain $v\in K$, corresponding to the first term $T_{G-K}$; and (ii) sets which contain a single $v\in K$, and thus contain none of its neighbors, corresponding to the second term, $-u\sum_{v\in K}h_vT_{G-\Gamma[v]}$.
		When $K$ is simplicial, $K\coloneqq K_s$, we can show the additional recursion relation
		\begin{align}
		T_G= T_{G - K_s} - u \sum_{v \in K_s} h_v T_{G-K_v}
		\label{eq:TKrecurrence2}
		\end{align}
		where $K_v \coloneqq \Gamma[v] \backslash (K_s \backslash v)$ is a clique in $G$ for all $v \in K_s$, since $K_s$ is simplicial. 
		We show Eq.~(\ref{eq:TKrecurrence2}) by applying the recursion relation in Eq.~(\ref{eq:TKrecurrence}) twice in succession
		\begin{align}
		T_G &= T_{G - K_s} - u \sum_{v \in K_s} h_v T_{G- K_s - K_v} \label{eq:TKrecurrence2line1}\\
		T_G &= T_{G - K_s} - u \sum_{v \in K_s} h_v \left(T_{G - K_v} + u \sum_{w \in K_s\backslash\{v\}} h_w T_{G - K_s - K_v - K_w}\right) \label{eq:TKrecurrence2line2}
		\end{align}
		where we have rearranged the expansion Eq.~(\ref{eq:TKrecurrence}) by the clique $K_s$ in the graph $G - K_v$ and substituted into Eq.~(\ref{eq:TKrecurrence2line1}) to obtain Eq.~(\ref{eq:TKrecurrence2line2}) (recall that, by definition, $K_v \cap K_s = \{v\}$). 
		Expanding Eq.~(\ref{eq:TKrecurrence2line2}), we see that the operators $h_v$ and  $h_w$ anticommute since $v$ and $w$ are distinct vertices both belonging to the clique $K_s$. 
		However, the subscript of the transfer matrix is symmetric under the exchange of $v$ and $w$ in the double sum. 
		This double sum over $v \neq w \in K_s$ therefore vanishes and we obtain the desired relation in Eq.~(\ref{eq:TKrecurrence2}).
		
		Notice that both \cref{eq:TKrecurrence,eq:TKrecurrence2} have analogues in terms of $T_G^-$, given by substituting $u$ for $-u$ in these identities. Additionally, both \cref{eq:TKrecurrence,eq:TKrecurrence2} have analogues with $h_v$ to the right of the transfer operator instead of to the left. It is especially surprising that this is true for \cref{eq:TKrecurrence2}, since $h_v$ does not commute with $T_{G - K_v}$ in general. Examining the proof however, we see that we can equivalently pull $h_v$ to the right instead of to the left everywhere, and the proof goes through. In the forthcoming proofs, we will often refer to our use of these analogous identities as \cref{eq:TKrecurrence,eq:TKrecurrence2}, as the specific form of the identity we are using will be clear from context.

		By similar reasoning as for $T_G$, we have the corresponding recurrence relation for $P_G(-u^2)$
		\begin{align}
		P_G =& P_{G - K} - u^2 \sum_{v \in K} b^2_v P_{G-\Gamma[v]}
		\label{eq:PKrecurrence}
		\end{align}
		Note that, since any induced subgraph of $G$ is also ECF, we can expand Eq.~(\ref{eq:lemma2restate}) in Lemma  \ref{lem:transfer_poly} by Eq.~(\ref{eq:TKrecurrence}) for some clique $K$ to obtain 
		\begin{align}
		P_G(-u^2) &= T_GT_G^- \\
		&= \left(T_{G - K} - u \sum_{v \in K} h_v T_{G-\Gamma[v]}\right) \left(T^-_{G - K} + u \sum_{v \in K} h_v T^-_{G-\Gamma[v]}\right) \\
		&= P_{G - K} - u^2 \sum_{v \in K} b_v^2 P_{G - \Gamma[v]} + u \sum_{v \in K} \left(T_{G - K} h_v T^-_{G - \Gamma[v]} - h_v T_{G - \Gamma[v]} T_{G - K}^-\right) \nonumber \\
		& \mbox{\hspace{5mm}} -u^2 \sum_{v \neq w \in K} h_v T_{G - \Gamma[v]} h_w T^-_{G - \Gamma[w]} \\
		P_G(-u^2) &= P_G(-u^2) + u \sum_{v \in K} \left(T_{G - K} h_v T^-_{G - \Gamma[v]} - h_v T_{G - \Gamma[v]} T_{G - K}^-\right) \nonumber \\ 
		& \mbox{\hspace{5mm}} -u^2 \sum_{v \neq w \in K} h_v T_{G - \Gamma[v]} h_w T_{G - \Gamma[w]}^-
		\end{align}
		In the last line, we used the recurrence relation in Eq.~(\ref{eq:PKrecurrence}). 
		This gives
		\begin{align}
		u^2 \sum_{v \neq w \in K} h_v T_{G - \Gamma[v]} h_w T_{G - \Gamma[w]}^- = u \sum_{v \in K} \left(T_{G - K} h_v T^-_{G - \Gamma[v]} - h_v T_{G - \Gamma[v]} T_{G - K}^-\right) \,.
		\label{eq:clique-decomp}
		\end{align}
		
		Now we expand the left-hand side of \Cref{eq:fundidentity}, and compute each of the two terms:
		\begin{equation}
		T_G \left(1 + u \sum_{v\in K_s}h_v\right) \chi T^-_G =T_G\chi T^-_G + u \sum_{v\in K_s}T_Gh_v\chi T_G^-
		\label{eq:expand_left}
		\end{equation}
		For the first term, we make use of the recurrence relation \cref{eq:TKrecurrence} for the simplicial clique, $K_s$, noting that $\chi$ anticommutes with $h_v$ for all $v\in K_s$ and commutes with $h_v$ for $v \notin K_s$
		\begin{align}
		T_G\chi T_G^-&= \left(T_{G - K_s} - u \sum_{v \in K_s} h_v T_{G - \Gamma[v]} \right) \left(T^-_{G - K_s} - u \sum_{v \in K_s} h_v T^-_{G - \Gamma[v]}\right) \chi \\
		&= \left[P_{G - K_s} + u^2 \sum_{v \in K_s} b_v^2 P_{G - \Gamma[v]} - u \sum_{v \in K_s} \left(T_{G - K_s} h_v T^-_{G - \Gamma[v]} + h_v T_{G - \Gamma[v]} T_{G - K_s}^-\right) \right.\nonumber \\
		& \mbox{\hspace{5mm}} \left.+u^2 \sum_{v \neq w \in K_s} h_v T_{G - \Gamma[v]} h_w T^-_{G - \Gamma[w]}\right] \chi \\
		T_G\chi T_G^- &= \left(P_G + 2u^2 \sum_{v \in K_s} b_v^2 P_{G - \Gamma[v]} - 2u\sum_{v\in K_s}h_v T_{G - \Gamma[v]} T^-_{G-K_s}
		\right)\chi \label{eq:chitermline4}
		\end{align}
		In the last line, we used the recurrence relation Eq.~(\ref{eq:PKrecurrence}) and the identity Eq.~(\ref{eq:clique-decomp}).
		
		Turning to the second term in \cref{eq:expand_left}, we consider the individual terms in the sum separately. 
		For each $v \in K_s$, we expand by $K_v$ using Eq.~(\ref{eq:TKrecurrence}). 
		We then use the fact that $h_v \chi$ anticommutes with $h_w$ for all $w \in K_v$ and commutes with $h_w$ for all $w \notin K_v$. 
		A similar set of steps as above gives
		\begin{equation}
		T_Gh_v\chi T_G^-=\Bigg(2P_{G-K_v}-P_G-2u\sum_{w\in K_v}h_wT_{G - \Gamma[w]} T^-_{G-K_v}
		\Bigg)h_v\chi,
		\label{eq:T_Gh_vchiT_G^-}
		\end{equation}
		where we have used a different rearrangement of \cref{eq:PKrecurrence} from that in Eq.~(\ref{eq:chitermline4}) to simplify the expression. 
		We next combine \cref{eq:chitermline4,eq:T_Gh_vchiT_G^-} according to the linear combination on the left-hand side of Eq.~(\ref{eq:fundidentity}) to obtain
		\begin{equation}
		\begin{split}
		T_G\left(1+u\sum_{v\in K_s}h_v\right)\chi T_G^-&=P_G\left(1-u\sum_{v\in K_s}h_v\right)\chi
		\\
		&
		\hspace{5mm}+2u\sum_{v\in K_s}\left(u b_v^2P_{G-\Gamma[v]}-h_v T_{G-\Gamma[v]}T^-_{G-K_s}\right)\chi
		\\
		&
		\hspace{1cm}+2u\sum_{v\in K_s}\left(P_{G-K_v}-u\sum_{w\in K_v}h_w T_{G-\Gamma[w]}T^-_{G-K_v}\right)h_v\chi.
		\end{split}
		\label{eq:lem3proofwDelta}
		\end{equation}
		We next proceed to prove that the last two terms in \cref{eq:lem3proofwDelta} evaluate to zero. 
		Denote these terms by $\mathrm{\Delta}$. 
		What follows is a tedious yet straightforward rearrangement of the expression for $\mathrm{\Delta}$:
		\begin{align}
		\mathrm{\Delta} &= 2u\sum_{v\in K_s}\big(ub_v^2P_{G-\Gamma[v]}-h_v T_{G-\Gamma[v]}T^-_{G-K_s}\big)\chi \nonumber \\ 
		& \mbox{\hspace{5mm}} + 2u\sum_{v\in K_s}\Big(P_{G-K_v}-u\sum_{w\in K_v}h_w T_{G-\Gamma[w]}T^-_{G-K_v}\Big)h_v\chi \label{eq:lem3proofline1}
		\end{align}
		We begin by separating the sum over $w \in K_v$ into the cases where $w = v$ term and $w \neq v$ terms, and make other minor rearrangements for simplification.
		We also commute the operators $\chi$ and $h_v$ to the left of each expression, taking care to keep track of the sign changes and employ $h_v^2 = b_v^2$ in the $w = v$ term. Finally, we have made use of the identification $G - \Gamma[v] \simeq G - K_s - K_v$ for $v \in K_s$ in the subscripts of the first two terms.
		Thus we can write $\mathrm{\Delta}$ as
		\begin{align}
		\mathrm{\Delta}&= 2\chi \Big\{ \Big[ u^2 \sum_{v \in K_s} b_v^2 P_{G - K_s - K_v} + u \Big(\sum_{v \in K_s} h_v T_{G - K_s - K_v}\Big) T_{G - K_s}^-\Big] \nonumber \\
		& \mbox{\hspace{5mm}} - u\sum_{v \in K_s} \Big[P_{G - K_v} h_v + u \Big(\sum_{\substack{w \in K_v \\ w \neq v}} h_v h_w T_{G - \Gamma[w]} + b_v^2 T_{G - K_s - K_v}\Big) T_{G - K_v}^-\Big]\Big\} \label{eq:lem3proofline2} 
		\end{align}
		Next, we expand the factor of $T^-_{G - K_v}$, for the $w = v$ term, by the recursion relation Eq.~(\ref{eq:TKrecurrence}) for the clique $K_s\backslash\{v\}$, so that $\mathrm{\Delta}$ becomes
		\begin{align}
		\mathrm{\Delta} &= 2 \chi \Big\{\Big[u^2 \sum_{v \in K_s} b_v^2 P_{G - K_s - K_v} + u \Big(\sum_{v \in K_s} h_v T_{G - K_s - K_v}\Big)T^-_{G - K_s}\Big] \nonumber \\ 
		&\mbox{\hspace{5mm}} - u \sum_{v \in K_s} \Big[P_{G - K_v}h_v + u \sum_{\substack{w \in K_v \\ w \neq v}} h_v h_w T_{G - \Gamma[w]} T^-_{G - K_v} \label{eq:lem3proofline3} \\
		&\mbox{\hspace{10mm}} + u b_v^2 T_{G - K_s - K_v} \Big(T^-_{G - K_s - K_v} + u \sum_{\substack{w \in K_s \\ w \neq v}} h_w T^-_{G - K_v  - \Gamma[w]}\Big)\Big]\Big\} \nonumber
		\end{align}
		We next simplify the first of the terms in the final parenthesis of \cref{eq:lem3proofline3} expansion using \Cref{lem:transfer_poly} to give
		\begin{align}
		\mathrm{\Delta}&= 2 \chi \Big\{\Big[u^2 \sum_{v \in K_s} b_v^2 P_{G - K_s - K_v} + u \Big(\sum_{v \in K_s} h_v T_{G - K_s - K_v}\Big)T^-_{G - K_s}\Big] \nonumber \\ 
		&\mbox{\hspace{5mm}} - u \sum_{v \in K_s} \Big[P_{G - K_v}h_v + u \sum_{\substack{w \in K_v \\ w \neq v}} h_v h_w T_{G - \Gamma[w]} T^-_{G - K_v} \label{eq:lem3proofline4}\\
		&\mbox{\hspace{10mm}} + u b_v^2 P_{G - K_s - K_v} + u^2 b_v^2 T_{G - K_s - K_v} \sum_{\substack{w \in K_s \\ w \neq v}} h_w T^-_{G - K_v  - \Gamma[w]}\Big]\Big\}. \nonumber
		\end{align}
		Here, we notice that the first term and second-to-last term in \cref{eq:lem3proofline4} cancel to give
		\begin{align} \mathrm{\Delta} &= 2 \chi \Big[u \Big(\sum_{v \in K_s} h_v T_{G - K_s - K_v}\Big)T^-_{G - K_s} - u \sum_{v \in K_s} \Big(P_{G - K_v}h_v + u \sum_{\substack{w \in K_v \\ w \neq v}} h_v h_w T_{G - \Gamma[w]} T^-_{G - K_v}  \nonumber \\
		&\mbox{\hspace{10mm}} + u^2 b_v^2 T_{G - K_s - K_v} \sum_{\substack{w \in K_s \\ w \neq v}} h_w T^-_{G - K_v  - \Gamma[w]}\Big)\Big] \label{eq:lem3proofline5} \end{align}
		We again make minor rearrangements using the factorizations ${P_{G -K_v}=T_{G - K_v} T^-_{G - K_v}}$ and $b_v^2=h_v^2$, together with the fact that the term $\chi h_v$ commutes with all operators outside of the clique $K_v  \subseteq \Gamma[w]$ for $w \in K_v$, so that
		\begin{align}
		\mathrm{\Delta}&= 2 u \chi \sum_{v \in K_s} h_v \Big( T_{G - K_s - K_v} T^-_{G - K_s} - u^2 h_v T_{G - K_s - K_v} \sum_{\substack{w \in K_s \\ w \neq v}} h_w T^-_{G - K_v - \Gamma[w]} \Big) \nonumber \\ 
		&\mbox{\hspace{5mm}} - 2u \sum_{v \in K_s} \Big(T_{G - K_v} - u \sum_{\substack{w \in K_v \\ w \neq v}} h_w T_{G - \Gamma[w]}\Big) T^-_{G - K_v} \chi h_v \label{eq:lem3proofline6} 
		\end{align}
		Making use of \cref{eq:TKrecurrence}, we can rewrite the parentheses in the second term as $T_G$ plus the missing term from the sum over $K_v$, again making use of the identification $G - \Gamma[v] \simeq G - K_s - K_v$ for $v \in K_s$:
		\begin{align}
		\mathrm\Delta&= 2 u \chi \sum_{v \in K_s} h_v \Big(T_{G - K_s - K_v} T^-_{G - K_s} - u^2 h_v T_{G - K_s - K_v} \sum_{\substack{w \in K_s \\ w \neq v}} h_w T^-_{G - K_v - \Gamma[w]} \Big) \nonumber \\ 
		&\mbox{\hspace{5mm}} - 2u \sum_{v \in K_s} \big(T_G + u h_v T_{G - K_s - K_v}\big) T^-_{G - K_v} \chi h_v \label{eq:lem3proofline7}
		\end{align}
		Next, we collect the residual term from the sum over $K_s$ in the second parentheses with the final term in the first parentheses, to obtain
		\begin{align}
		\mathrm{\Delta} &= 2 u \chi \sum_{v \in K_s} \Big[h_v T_{G - K_s - K_v} T^-_{G - K_s} + u b_v^2 T_{G - K_s - K_v} \Big(-u \sum_{\substack{w \in K_s \\ w \neq v}} h_w T^-_{G - K_v - \Gamma[w]} + T^-_{G - K_v} \Big) \Big] \nonumber \\ 
		&\mbox{\hspace{5mm}} - 2u \sum_{v \in K_s} T_G T^-_{G - K_v} \chi h_v \label{eq:lem3proofline8} \end{align}
		Expanding the second term in parentheses using the recurrence relation Eq.~(\ref{eq:TKrecurrence}) for the clique $K_s\backslash\{v\}$, we find
		\begin{align}
		\mathrm{\Delta}&= 2 u \chi \sum_{v \in K_s} \Big[h_v T_{G - K_s - K_v} T^-_{G - K_s} \nonumber \\ 
		& \mbox{\hspace{5mm}} + u b_v^2 T_{G - K_s - K_v} \Big(-u \sum_{\substack{w \in K_s \\ w \neq v}} h_w T^-_{G - K_v - \Gamma[w]} + T^-_{G - K_s - K_v} + u \sum_{\substack{w \in K_s \\ w \neq v}} h_w T^-_{G - K_v - \Gamma[w]} \Big) \Big]  \label{eq:lem3proofline9}
		\\ 
		&\mbox{\hspace{10mm}} - 2u \sum_{v \in K_s} T_G T^-_{G - K_v} \chi h_v \nonumber
		\end{align}
		The first and third terms in parentheses in \cref{eq:lem3proofline9} cancel, and by \cref{lem:operatorcomms}, the remaining term is $P_{G - K_s - K_v}$. Thus
		\begin{align}
		\mathrm{\Delta} &= 2 u \chi \sum_{v \in K_s} \Big(h_v T_{G - K_s - K_v} T^-_{G - K_s} + u b_v^2 P_{G - K_s - K_v} \Big) - 2u \sum_{v \in K_s} T_G T^-_{G - K_v} \chi h_v \label{eq:lem3proofline10}
		\end{align}
		Finally, we employ the recursion relation Eq.~(\ref{eq:TKrecurrence2}) to obtain
		\begin{align}
		\mathrm{\Delta} &= 2 u \chi \sum_{v \in K_s} \left(h_v T_{G - K_s - K_v} T^-_{G - K_s} + u b_v^2 P_{G - K_s - K_v} \right) + 2T_G \left(T^-_G - T^-_{G - K_s} \right) \chi \\
		&= -2 \left(T_G + u \sum_{v \in K_s} h_v T_{G - K_s - K_v}\right) T^-_{G - K_s} \chi + 2u^2 \chi \sum_{v \in K_s} b_v^2 P_{G - K_s - K_v} + 2 T_G T^-_G \chi \\
		&= -2 T_{G - K_s} T^-_{G - K_s} \chi + 2u^2 \chi \sum_{v \in K_s} b_v^2 P_{G - K_s - K_v} + 2 T_G T^-_G \chi \\
		&= -2 \left(P_{G - K_s}  - u^2 \sum_{v \in K_s} b_v^2 P_{G - K_s - K_v}\right) \chi + 2 T_G T^-_G \chi \\
		\mathrm{\Delta} &= -2 P_G \chi + 2 P_G \chi = 0
		\end{align}
		Therefore, Eq.~(\ref{eq:lem3proofwDelta}) evaluates to
		\begin{align}
		T_G\left(1+u\sum_{v\in K_s}h_v\right)\chi T_G^-&=P_G\left(1-u\sum_{v\in K_s}h_v\right)\chi
		\end{align}
		and this proves the lemma. \qed
	\end{proof}
	Since $u_{\ell}$ is a root of $P_G(-u^2)$, the right-hand-side of \Cref{eq:fundidentity} becomes zero for $u=u_\ell$.. The left-hand side can then be rearranged, such that we can rewrite \cref{eq:Hcommchi} as
	\begin{equation}
	[H,T_G(\pm u_{\ell})\chi T_G(\mp u_{\ell})]=\mp\frac{2}{u_{\ell}}T_G(\pm u_{\ell})\chi T_G(\mp u_{\ell}).
	\label{eq:raise_expr}
	\end{equation}
	Thus, \Cref{eq:raise_expr} implies that the incognito modes of the model act as canonical ladder operators and the Hamiltonian of the ECH model can be written as~\cref{eq:ffsolvable} in terms of the incognito modes, $\psi_{\ell}$.
	
	Finally, to show that the canonical modes of the Hamiltonian are fermionic, we must confirm that the incognito modes obey the canonical anticommutation relations.
	It is straightforward to see from the definition of $\psi_\ell$ that ${(\psi_{\ell})^2 \propto P(-u_\ell^2)^2=0}$ (remember the simplicial mode $\chi$ is defined as a Pauli operator, so $\chi^2 = I$).
	Further, since the transfer matrix and the simplicial mode are both Hermitian, we have
	\begin{equation}
	\psi_{\ell}^\dagger=\frac{1}{N_\ell}\left(T(u_\ell)\chi T(-u_\ell)\right)\coloneqq\psi_{-\ell}.
	\end{equation}
	\begin{lemma}
		\label{lem:operatorcomms}
		The incognito modes, $\{\psi_{\ell}\}$ (\cref{def:incognito}),
		satisfy the canonical anticommutation relations,
		\begin{equation}
		\{\psi_\ell,\psi_{-m}\}=\delta_{m,\ell}
		\label{eq:carrestate}
		\end{equation}
		with normalization
		\begin{equation}
		(N_\ell)^2={16u_\ell^2}P_{G-K_s}(-u^2_\ell)\partial_x(P_G(x))_{x = -u_\ell^2},
		\label{eq:normalization}
		\end{equation}
		where $\partial_x(P_G(x))_{x = -u_\ell^2}$ denotes the derivative of $P_G(x)$ with respect to $x$, evaluated at $-u_\ell^2$.
	\end{lemma}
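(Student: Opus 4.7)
The plan is to verify the canonical anticommutation relations by direct algebraic computation, using the identities of \cref{lem:conserved_charges,lem:transfer_poly,lem:simplicialTcomm} throughout. The nilpotency $\psi_\ell^2 = \psi_{-\ell}^2 = 0$ is immediate: after expanding the definitions, the transfer operators of $G$ mutually commute by \cref{lem:conserved_charges}, and one obtains the factor $T_G(u_\ell) T_G(-u_\ell) = P_G(-u_\ell^2) = 0$ from \cref{lem:transfer_poly} and the defining condition \cref{eq:spenergycond2}. This scalar factor can be moved freely past the central $\chi$ operators to annihilate each of $\psi_\ell^2$ and $\psi_{-\ell}^2$.

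For the cross anticommutator $\{\psi_\ell, \psi_{-m}\}$, my first step is to rewrite the modes in a more convenient form. Decomposing $T_G(u) = T_{G-K_s}(u) - u\sum_{v \in K_s} h_v T_{G-\Gamma[v]}(u)$ and noting that $\chi$ commutes with $T_{G-K_s}(u)$ while anticommuting with each of the remaining summands (since each carries an $h_v$ with $v \in K_s$) yields the identity $\chi T_G(u) \chi = 2 T_{G-K_s}(u) - T_G(u)$. Substituting this into the definition of $\psi_\ell$ and invoking $T_G(u_\ell) T_G(-u_\ell) = 0$ produces the alternative form $\psi_\ell = 2 N_\ell^{-1} T_G(-u_\ell) T_{G-K_s}(u_\ell) \chi$, with an analogous expression for $\psi_{-\ell}$. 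Both $\chi$ factors are now on the right.

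I would then expand $\{\psi_\ell, \psi_{-m}\}$ using these alternative forms and bring the two $\chi$ operators into adjacency via the same identity. The resulting expression involves products of $T_G$ and $T_{G-K_s}$ evaluated at $\pm u_\ell$ and $\pm u_m$. For $\ell = m$, repeated use of $T_G(u_\ell) T_G(-u_\ell) = 0$ together with $T_{G-K_s}(u_\ell) T_{G-K_s}(-u_\ell) = P_{G-K_s}(-u_\ell^2)$ collapses many terms, and the remaining scalar content is extracted by differentiating \cref{eq:lemma2restate} at $u = u_\ell$: since $P_G(-u_\ell^2) = 0$, one has $\partial_u\bigl[T_G(u) T_G(-u)\bigr]\big|_{u = u_\ell} = -2 u_\ell\, \partial_x P_G(x)\big|_{x = -u_\ell^2}$, which accounts for the derivative factor in $N_\ell^2$. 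For $\ell \neq m$, a pairing argument in the spirit of the proofs of \cref{lem:conserved_charges,lem:transfer_poly}, exploiting the simultaneous vanishing $P_G(-u_\ell^2) = P_G(-u_m^2) = 0$, should cancel the cross contributions.

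The main obstacle is confirming that $\{\psi_\ell, \psi_{-\ell}\}$ is genuinely proportional to the identity with the stated constant. Since $T_G(u)$ and $T_{G-K_s}(v)$ do not commute in general---their commutator produces Pauli content supported on the neighborhoods $K_v \setminus v$---the cancellation of all operator content leaving only a scalar requires careful bookkeeping of these commutator corrections. Overcoming this obstacle is expected to rely on the full strength of \cref{lem:simplicialTcomm}, plausibly applied in a differentiated form at $u = u_\ell$, combined with the recursion relations for $T_G$ and $P_G$ used in the proof of that lemma; the factors $u_\ell^2$, $P_{G-K_s}(-u_\ell^2)$, and $\partial_x P_G(x)\big|_{x=-u_\ell^2}$ in $N_\ell^2$ should then emerge respectively from the simplicial commutation, the subgraph factorization via \cref{lem:transfer_poly}, and the evaluation at a simple root of $P_G$.
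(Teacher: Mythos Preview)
Your rewriting $\psi_\ell = 2N_\ell^{-1} T_G(-u_\ell)\,T_{G-K_s}(u_\ell)\,\chi$ is correct, and in fact it instantly delivers one of the two key identities the paper uses: conjugating once more by $\chi$ gives
\[
\{\psi_\ell,\chi\} \;=\; \frac{2}{N_\ell}\Bigl(T_G(-u_\ell)T_{G-K_s}(u_\ell) + \chi T_G(-u_\ell)\chi\, T_{G-K_s}(u_\ell)\Bigr) \;=\; \frac{4}{N_\ell}\,P_{G-K_s}(-u_\ell^2),
\]
a \emph{scalar}. You compute this reformulation but then abandon it in favor of expanding $\{\psi_\ell,\psi_{-m}\}$ as products of $T_G$ and $T_{G-K_s}$ at four different arguments, where---as you correctly observe---the factors do not commute and there is no visible mechanism to reduce the result to a multiple of the identity. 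The ``pairing argument in the spirit of \cref{lem:conserved_charges,lem:transfer_poly}'' for $\ell\neq m$ is not a plan: those lemmas cancel Pauli words by swapping pieces of independent sets, which does nothing to control mixed $T_G/T_{G-K_s}$ products.

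The paper bypasses this entirely with a second identity you are missing. From \cref{lem:simplicialTcomm} applied at generic $u$, together with $[H,\psi_\ell]=(2/u_\ell)\psi_\ell$, one obtains the exchange rule
\[
(u_\ell+u)\,T_G(u)\,\psi_\ell \;=\; (u_\ell-u)\,\psi_\ell\,T_G(u).
\]
This lets one push $\psi_\ell$ past both transfer matrices in $T_G(u)\chi T_G(-u)$, yielding
\[
\{\psi_\ell,\,T_G(u)\chi T_G(-u)\}\;=\;\frac{u_\ell+u}{u_\ell-u}\,T_G(u)\,\{\psi_\ell,\chi\}\,T_G(-u)
\;=\;\frac{u_\ell+u}{u_\ell-u}\,P_G(-u^2)\cdot\frac{4}{N_\ell}P_{G-K_s}(-u_\ell^2),
\]
where the scalarity of $\{\psi_\ell,\chi\}$ collapses the middle. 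Taking $u\to u_m$ gives zero for $m\neq\ell$ (since $P_G(-u_m^2)=0$ while the prefactor is finite), and L'H\^opital at $u\to u_\ell$ produces exactly $-2u_\ell\,\partial_x P_G(x)\big|_{x=-u_\ell^2}$, whence the stated $N_\ell^2$. Your derivative intuition is right, but without the exchange relation you have no way to arrive at a scalar expression on which to perform that limit.
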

	\begin{proof}
		This proof closely follows the derivation by Fendley~\cite{fendley2019free}, again with generalizations to the graph recursion relations. 
		To find the anticommutator between any two fermionic operators, we take the limit of
		\begin{equation}
		\{\psi_\ell,\psi_{-m}\}=\frac{1}{N_m}\lim_{u\rightarrow u_m}\{\psi_\ell,T_G(u)\chi T_G(-u)\}
		\label{eq:anti_comm_lim}
		\end{equation}
		We start by finding the explicit relationship acquired by commuting $\psi_\ell$ and $T_G(u)$.
		To do this we expand $T_G(u)\chi T_G(-u)$, using \cref{eq:fundidentity,lem:simplicialTcomm}, noting that the transfer matrices commute, even at different $u$, due to \Cref{lem:conserved_charges}, so that
		{\begin{align}
			T_G(u)\psi_\ell T_G(-u)=&\frac{1}{N_\ell}T_G(-u_\ell)\left[-u T_G(u)\left(\sum_{v \in K_s} h_v \chi \right) T_G(-u) + P_G(-u^2) \left(1 - u \sum_{v \in K_s} h_v \right) \chi \right] T_G(u_{\ell}) \\
			=&-\frac{u}{2}T_G(u)[H,\psi_\ell]T_G(-u)+P_G(-u^2)\bigg(\psi_\ell-\frac{u}{2}[H,\psi_\ell]\bigg)\\
			T_G(u)\psi_\ell T_G(-u)=&\frac{1}{u_\ell}\bigg(-uT_G(u)\psi_\ell T_G(-u)+P_G(-u^2)(u_\ell-u)\psi_\ell
			\bigg)
			\label{eq:TpsiT}
			\end{align}}
		By rearranging the expression in \cref{eq:TpsiT}, we find algebra obeyed by the transfer matrices and $\psi_\ell$ as
		\begin{equation}
		(u_\ell+u)T_G(u)\psi_\ell=(u_\ell-u)\psi_\ell T_G(u).
		\label{eq:(ul_u)psi}
		\end{equation}
		Thus, the argument of \Cref{eq:anti_comm_lim} becomes 
		\begin{equation}
		\{\psi_\ell,T_G(u)\chi T_G(-u)\}=\frac{u_\ell+u}{u_\ell-u}T_G(u)\{\psi_\ell,\chi\}T_G(-u).
		\label{eq:anticommratio}
		\end{equation}
		The anticommutator between $\psi_\ell$ and $\chi$ can be calculated explicitly using the recursion relation \Cref{eq:TKrecurrence},
		\begin{equation}
		\{\psi_\ell,\chi\}=\frac{4}{N_\ell}P_{G-K_s}(-u_\ell^2).
		\label{eq:anti_com_psi_chi}
		\end{equation}
		Since the right hand side of \cref{eq:anti_com_psi_chi} is scalar, we can commute the transfer matrix, $T(u)$, in \cref{eq:anti_comm_lim} through the anticommutator and use \Cref{eq:vertex_weighted_poly} to write explicitly
		\begin{equation}
		\{\psi_\ell,T_G(u_m)\chi T_G(-u_m)\}=\lim_{u\rightarrow u_m}\frac{4}{N_\ell}P_{G-K_s}(-u_\ell^2)P_G(-u^2)\frac{u_\ell+u}{u_\ell-u}.
		\end{equation}
		In this limit we find that the polynomial $P_G(-u^2)\rightarrow0$, except in the case when $\ell=m$.
		Here, this limit requires the use of L'H\^{o}pital's rule, since both the numerator and denominator of the expression go to zero.
		Doing so gives
		\begin{eqnarray}
		\{\psi_\ell,\psi_{-m}\}=\delta_{\ell,m}\frac{16u_\ell^2}{N_\ell^2}P_{G-K_s}(-u^2_\ell)\partial_x(P_G(x))_{x=-u_\ell^2}.
		\end{eqnarray}
		Thus, we define the normalization factor of the incognito modes to be
		\begin{equation}
		(N_\ell)^2={16u_\ell^2}P_{G-K_s}(-u^2_\ell)\partial_x(P_G(x))_{x=-u_\ell^2},
		\label{eq:normalization2}
		\end{equation}
		revealing that the $\{\psi_\ell\}$ do indeed satisfy the algebra of fermions.
		\qed
	\end{proof}
	Finally, we prove Theorems \ref{thm:theorem1} and \ref{thm:theorem2}. In general, the existence of fermionic ladder operators satisfying \cref{eq:raise_expr,eq:carrestate} is only enough to show that the Hamiltonian block-diagonalizes into sectors (i.e. multiplets). In each sector, the Hamiltonian has the same free spectrum up to a sector-dependent constant shift. In our case, however, the transfer matrix formalism allows us to prove the stronger statements of  Theorems \ref{thm:theorem1} and \ref{thm:theorem2}. Having proven the necessary lemmas, this proof is straightforward, as it matches exactly to the proof given by Fendley in Ref.~\cite{fendley2019free}. We restate the essential steps of this proof here for completeness.
	
	\noindent \textit{Proof of \cref{thm:theorem1,thm:theorem2}.}
	In Ref.~\cite{fendley2019free}, the \emph{higher Hamiltonians} $\{H^{(k)}\}_{k = 1}^{\infty}$  are defined as operators generated by the logarithmic derivative of $T_G$
	\begin{align}
	\mathcal{H}(u) \coloneqq \sum_{k = 1}^{\infty} H^{(k)} u^{k - 1} \coloneqq -\partial_u \ln \left[T_{G}(u)\right] = -\frac{1}{P_G(-u^2)} T_{G}(-u) T'_{G}(u) \,.
	\end{align}
	The last equality follows from Lemma \ref{lem:transfer_poly}, where $T'_G$ is the derivative of $T_G$ with respect to $u$. Substituting $u = 0$ into the second and fourth expressions in this definition demonstrates that $H^{(1)} \coloneqq H$. The operator $\mathcal{H}(u)$ is a meromorphic function of $u$ whose only singularities are at the roots of $P_G(-u^2)$, since $T_G(-u)T'_G(u)$ is a finite series in $u$ with bounded-operator coefficients. Since $P_G(x)$ has a constant term, none of the roots of $P_G(-u^2)$ are at $u = 0$, and so $\mathcal{H}(u)$ is analytic on a small disk centered at this point. Therefore, we can write each of the higher Hamiltonians as an integral over a small oriented contour $C$ around $u = 0$
	\begin{align}
	H^{(k)} = \frac{1}{2 \pi i} \oint_C du \ u^{-k} \mathcal{H}(u)  \,.
	\end{align}
	Changing variables to $u = 1/\epsilon$ gives
	\begin{align}
	H^{(k)} = \frac{1}{2 \pi i} \oint_{\widetilde{C}} d\epsilon \ \epsilon^{k - 2} \mathcal{H}(\epsilon)
	\end{align}
	where the new contour $\widetilde{C}$ encircles all of the poles at the zeros of $P_G(-u^2)$ with the same orientation as $C$ (reversing this orientation incurs a sign change). This gives
	\begin{align}
	H^{(k)} = -\frac{1}{2\pi i} \oint_{\widetilde{C}} d \epsilon \frac{\epsilon^{2\alpha + k - 2}}{\prod_{j = 1}^{\alpha} (\epsilon^2 - \epsilon^2_j)} T_G(-1/\epsilon) T'_G(1/\epsilon)
	\end{align}
	where we have utilized the factorization $P_G(-u^2) = \prod_{j = 1}^{\alpha} \left(1 - \epsilon_j^2 u^2\right)$ and multiplied numerator and denominator by $u^{-2\alpha} \coloneqq \epsilon^{2\alpha}$ in the integral ($u\neq0$ over $C$). Note here that $T_G'$ is still the derivative of $T_G$ with respect to its argument and not the derivative with respect to $\epsilon$ in the equation above. Since the maximum power of $u$ in $T_G(u)$ is $\alpha$, the minimum power of $\epsilon$ in $T_G(-1/\epsilon) T_G'(1/\epsilon)$ is $-2\alpha + 1$, and so the integrand has no poles at $\epsilon = 0$ for $k \geq 1$. The only poles of the integrand are therefore at $\pm \epsilon_j$, and so the Cauchy residue theorem gives
	\begin{align}
	H^{(k)} &= -\sum_{j = 1}^{\alpha} \frac{\epsilon_j^{2\alpha + k - 2}}{\prod_{\ell = 1, \ell \neq j}^{\alpha} (\epsilon_j^2 - \epsilon^2_{\ell})} \left[\frac{1}{2\epsilon_j} T_G(-1/\epsilon_j) T'_G(1/\epsilon_j) - \frac{(-1)^k}{2\epsilon_j} T_G(1/\epsilon_j) T'_G(-1/\epsilon_j) \right] 
	\end{align}
	Using
	\begin{align}
	\partial_u \left[P_G(-u^2)\right]_{u = u_j} = -2 \epsilon_j \prod_{\ell = 1, \ell \neq j}^{\alpha} (1 - \epsilon^2_{\ell} u^2_j)
	\end{align}
	gives
	\begin{align}
	H^{(k)} &= \sum_{j = 1}^{\alpha} \frac{u_j^{-k}}{\partial_u \left[P_G(-u^2)\right]_{u = u_j}} \left[T_G(-u_j) T'_G(u_j) - (-1)^k T_G(u_j) T'_G(-u_j) \right]
	\label{eq:higherhamiltonianfinal}
	\end{align}
	Next we evaluate the commutator
	\begin{align}
	[\psi_j, \psi^{\dagger}_j] &= [\psi_j, \psi_{-j}] \label{eq:commlimitline1}\\
	&= \frac{1}{N_j} \lim_{u \rightarrow u_{j}} [\psi_j, T_{G}(u) \chi T_{G}(-u)] \label{eq:commlimitline2} \\
	[\psi_j, \psi^{\dagger}_j] &= \frac{1}{N_j} \lim_{u \rightarrow u_{j}} \left\{\left(\frac{u_j + u}{u_j - u}\right) T_{G}(u) [\psi_j, \chi] T_{G}(-u) \right\} \label{eq:commlimitline3}
	\end{align}
	This follows by similar steps to \cref{eq:anti_comm_lim,eq:anticommratio}. Our definition of the incognito modes, together with \Cref{lem:transfer_poly}, implies
	\begin{align}
	T_G(u_j) \psi_j = \psi_{-j} T_{G}(u_j) = 0
	\label{eq:zeroeigenvector}
	\end{align}
	both numerator and denominator of Eq.~(\ref{eq:commlimitline3}) vanish in the limit, so we have
	\begin{align}
	[\psi_j, \psi^{\dagger}_j] &= -\frac{2 u_j}{N_j} \left( T'_G(u_j) \psi_j \chi T_G(-u_j) + T_G(u_j) \chi \psi_j T'_G(-u_j)\right) \,.
	\end{align}
	Exchanging $\chi$ and $\psi_j$ using the anticommutator, Eq.~(\ref{eq:anti_com_psi_chi}), gives
	\begin{align}
	[\psi_j, \psi^{\dagger}_j] &= -\frac{8 u_j}{N^2_j} P_{G - K_s}(-u_j^2) \left( T'_G(u_j) T_G(-u_j) + T_G(u_j) T'_G(-u_j)\right)
	\end{align}
	as the additional terms vanish by Eq.~(\ref{eq:zeroeigenvector}). Rewriting the normalization condition of Eq.~(\ref{eq:normalization}) as,
	\begin{equation}
	(N_j)^2=-8u_jP_{G-K_s}(-u^2_j)\partial_u[P_G(-u^2)]_{u=u_j},
	\end{equation}
	and then substituting into the above expression gives
	\begin{align}
	[\psi_j, \psi^{\dagger}_j] &= \frac{1}{\partial_u[P_G(-u^2)]_{u=u_j}} \left( T'_G(u_j) T_G(-u_j) + T_G(u_j) T'_G(-u_j)\right)
	\label{eq:commutatorref}
	\end{align}
	Comparing Eq.~(\ref{eq:higherhamiltonianfinal}) for $k = 1$ to Eq.~(\ref{eq:commutatorref}) proves both Theorems \ref{thm:maintheorem1} and \ref{thm:maintheorem2}. \qed
	
	In this way we can see that a ECH model is described by noninteracting fermions, with single particle energies given by the reciprocals of the roots of the vertex-weighted independence polynomial, \cref{eq:vertex_weighted_poly}, and canonical modes given by the incognito modes (\cref{def:incognito}).
	This constitutes a complete solution to any model of this kind.
	
	\section{Examples}
	\label{sec:examples}
	In this section we analyze explicitly three sets of models whose Hamiltonians are (even-hole, claw)-free, thus admitting a free-fermion solution via \Cref{thm:maintheorem2,thm:maintheorem1}.
	The first set includes a pair of examples of models realized on small systems, including a model whose frustration graph is a line graph and a simple extension whose frustration graph is one of the nine forbidden subgraphs of a line graph (see~\Cref{tab:forbidden_lines}).
	The second set of examples is a class of graphs we call \textit{equipartition indifference} graphs.
	This family of models generalizes the well-known XY-model and was exactly solved at their critical points in Refs.~\cite{Alcaraz2020Free, Alcaraz2020Integrable}.
	In the third set of examples, we define a new family of integrable models constructed by combining the equipartition indifference graphs into more complex structures; these may also be solvable if they avoid even holes.
	
	\subsection{Small systems}
	\label{sec:smallsystems}
	Here we look at two related models on three qubits.
	The Hamiltonians of the models, denoted $H_5$ and $H_6$, and are related by the addition of a single term 
	\begin{align}
	H_5&=aX_1X_2+bZ_2+cY_1Y_2X_3+dY_1Z_2+eX_1Z_2\\
	H_6&=aX_1X_2+bZ_2+cY_1Y_2X_3+dY_1Z_2+eX_1Z_2+fY_1Y_2Z_3,
	\end{align}
	where the coupling strengths $\{a, b, c, d, e, f\}$ are arbitrary real numbers. 
	The frustration graphs for $H_5$ and $H_6$ are depicted in \Cref{fig:circle_graph}, with vertices labeled by their corresponding field strengths. 
	
	The frustration graph of the first model, $G(H_5)$, is a five-cycle, as depicted in~\cref{fig:circle_graph}~(a): the five Hamiltonian terms anticommute only with those directly before and after it in a closed chain. 
	$G(H_5)$ is a line graph, as such this model admits a Jordan-Wigner solution~\cite{Chapman2020Characterization}.
	Let $R(H_5)$ be the root graph of $G(H_5)$; this graph is also a five-cycle. Each Hamiltonian term is mapped to a Majorana bilinear with Majorana modes $\{\gamma_j\}_{j = 1}^5$ assigned to each vertex of the root graph. 
	Using this method, the Hamiltonian is mapped to
	\begin{equation}
	H= \frac{i}{2} \sum_{j,k=0}^4 \gamma_j h_{j,k} \gamma_k
	\end{equation}
	where the single particle Hamiltonian is given by
	\begin{equation}
	\frac{i}{2} \mathbf{h}= \frac{i}{2}\begin{pmatrix}
	0 & a& 0& 0&-e\\
	-a& 0& b& 0& 0\\
	0 &-b& 0& c& 0\\
	0 & 0&-c& 0& d\\
	e & 0& 0&-d& 0
	\end{pmatrix}.
	\end{equation}
	Note here that the orientation of $R(H_5)$ given by the signs of the elements in $\mathbf{h}$ is arbitrary, i.e. we can change the sign of any coupling coefficient without affecting the spectrum.
	
	\begin{figure}
		\centering
		\includegraphics{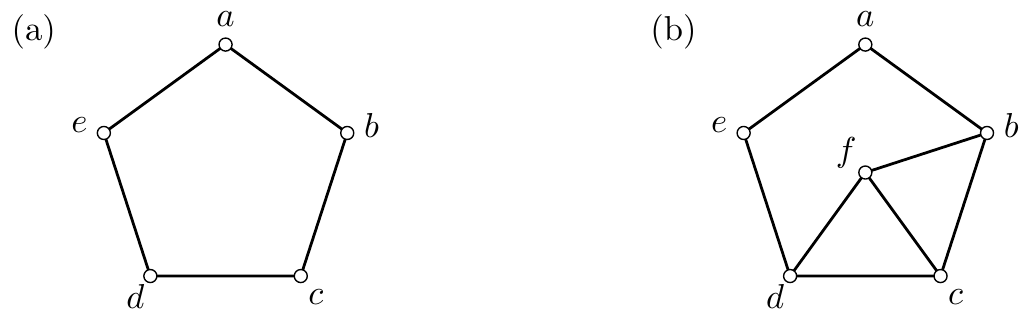}
		\caption{Frustration graphs for small system sizes solved in this section: (a) a five-cycle, which admits a generator-to-generator mapping, but is both even hole and claw free also, thus admitting a solution by the method developed in the present work,
			(b) one of the six forbidden subgraphs of a line graph with no twins, created by adding a single additional Hamiltonian term to the five-cycle, thus when $f\rightarrow0$, this model is identical to (a). This model does not admit a generator-to-generator mapping, but is solvable using the method developed here.}
		\label{fig:circle_graph}
	\end{figure}
	
	The frustration graph of $G(H_5)$ is also an (even hole, claw)-free graph: every edge induces a simplicial clique.
	Thus, the model can be solved using the method developed here.
	The vertex-weighted independence polynomial of $G(H_5)$ is
	\begin{equation}
	\begin{split}
	P_{G(H_5)}(-u^2)&=1-u^2 \left(a^2+b^2+c^2+d^2+e^2\right)\\
	&\hspace{1cm}+u^4 \left[a^2 \left(c^2+d^2\right)+b^2 \left(d^2+e^2\right)+c^2e^2\right].
	\end{split}
	\label{eq:poly_five_vert}
	\end{equation}
	$P_{G(H_5)}$ can be factored simply as a quadratic polynomial in $u^2$. The roots of $P_{G(H_5)}$ provide the the single particle energies, as well as the spectral parameters for the incognito modes, $\{\psi_k\}_k$ in Eq.~(\ref{eq:incognitodef}).
	Note that, as discussed in \cref{sec:symmetries}, $P_{G(H_5)}$ is exactly the characteristic polynomial of the single-particle Hamiltonian $i\mathbf{h}$
	\begin{align}
	P_{G(H_5)}(-u^2) = \det \left(\mathbf{I} - i u \mathbf{h} \right) \,.
	\end{align}
	Thus, we can see the direct link between the two approaches for a solution when the model is an even hole-free line graph.
	Furthermore, the eigenvectors of the single particle Hamiltonian, $\mathbf{h}$, elucidate the nonlocality of the canonical modes, $\{\psi_k\}_k$.
	
	The frustration graph $G(H_6)$ is depicted in \cref{fig:circle_graph}~(b).
	In direct contrast to $H_5$, this graph is one of the six forbidden subgraphs of a line graph that does not contain twins and admits no Jordan-Wigner mapping to noninteracting fermions.
	Nevertheless, \cref{fig:circle_graph}~(b)  contains no even holes or claws, and each maximal clique of the graph is simplicial.
	Thus, by \Cref{thm:theorem1} the model must be free.
	
	The vertex-weighted independence polynomial of $G(H_6)$ is
	\begin{equation}
	\begin{split}
	P_{G(H_6)}(-u^2)=&1-u^2 \left(a^2+b^2+c^2+d^2+e^2+f^2\right)+\\
	&+u^4\left[a^2\left(c^2+d^2+f^2\right)+b^2\left(d^2+e^2\right)+ c^2e^2+e^2f^2\right]
	\end{split}
	\label{eq:poly_six_vert}
	\end{equation}
	and so the single particle energies can be found by again solving a simple quadratic equation.
	
	Despite the similarities between the two models, $H_5$ admits a solution in terms of individual fermions localized to physical modes, while $H_6$ does not. 
	It therefore remains an open question to clarify the intrinsic link (if any) between the graphical and spatial structures for models with ECF frustration graphs, a stark contrast from the line-graph setting.
	
	While the spectrum, and fermionization, of the models is independent of the explicit Pauli realization, the qubitization of the graphs given does elucidate an interesting link between the ECF models and those that have a Jordan-Wigner mapping (line-graph models).
	Here $H_5$ and $H_6$ are related via a single-qubit rotation on the third qubit.
	To see this, rewrite the Hamiltonian as
	\begin{equation}
	H_6=aX_1X_2+bZ_2+Y_1Y_2(cX_3+fZ_3)+dY_1Z_2+eX_1Z_2.
	\end{equation}
	By applying the coupling strength dependent rotation $(cX_3+fZ_3)\rightarrow \pm \sqrt{c^2 + f^2} X_3$, we can see the direct relation between models $H_5$ and $H_6$. 
	In general, the transformation from an arbitrary (even-hole, claw)-free model to a similar line-graph model is nontrivial, requiring complicated, multi-qubit rotations.
	However, this particular example shows when two vertices share the same \emph{closed neighborhood} we can always perform a rotation to remove one of them, without altering the spectrum. This is analogous to the situation involving twin vertices -- i.e. vertices sharing an open neighborhood -- discussed in \cref{sec:symmetries}. The difference is that vertices sharing a closed neighborhood are themselves neighboring. Similarly to removing twin vertices by projecting onto a subspace, we remove these pairs by performing a rotation.

	\begin{figure}[t]
		\centering
		\begin{tabular}{cc}
			(a)\ \ \ \begin{minipage}{.45\textwidth}
				$k=2$ \includegraphics[width=.9\textwidth]{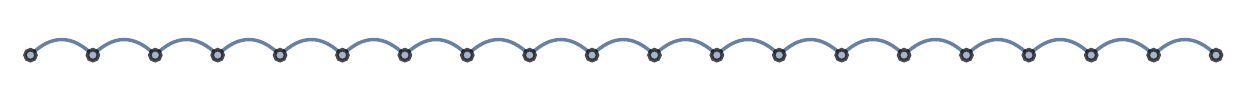}
				$k=3$ \includegraphics[width=.9\textwidth]{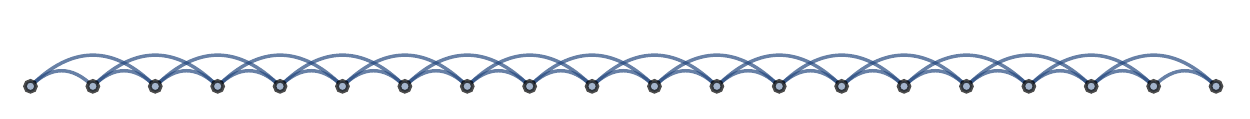}
				$k=4$ \includegraphics[width=.9\textwidth]{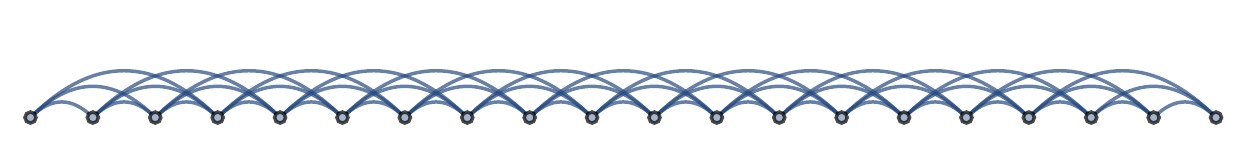}
				$k=5$ \includegraphics[width=.9\textwidth]{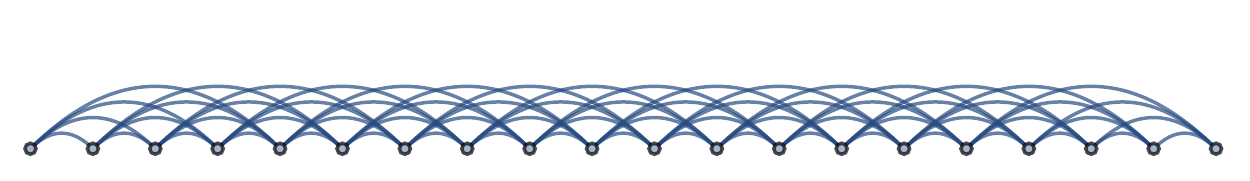}
			\end{minipage}
			& 
			\begin{minipage}{.45\textwidth}
				(b) \includegraphics[width=.9\textwidth]{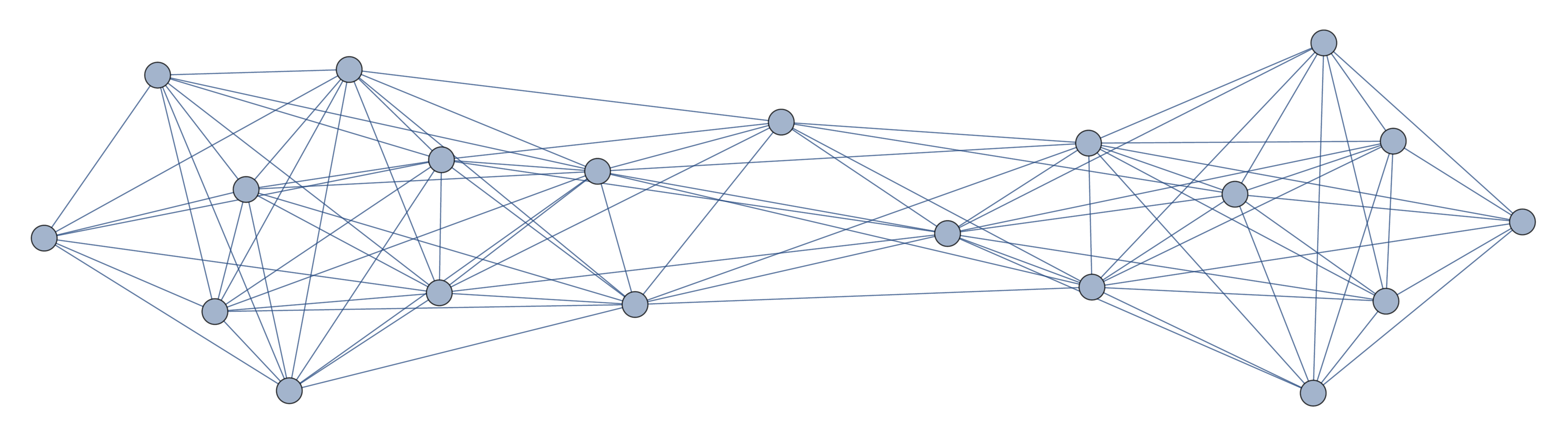}
				(c) \includegraphics[width=.9\textwidth]{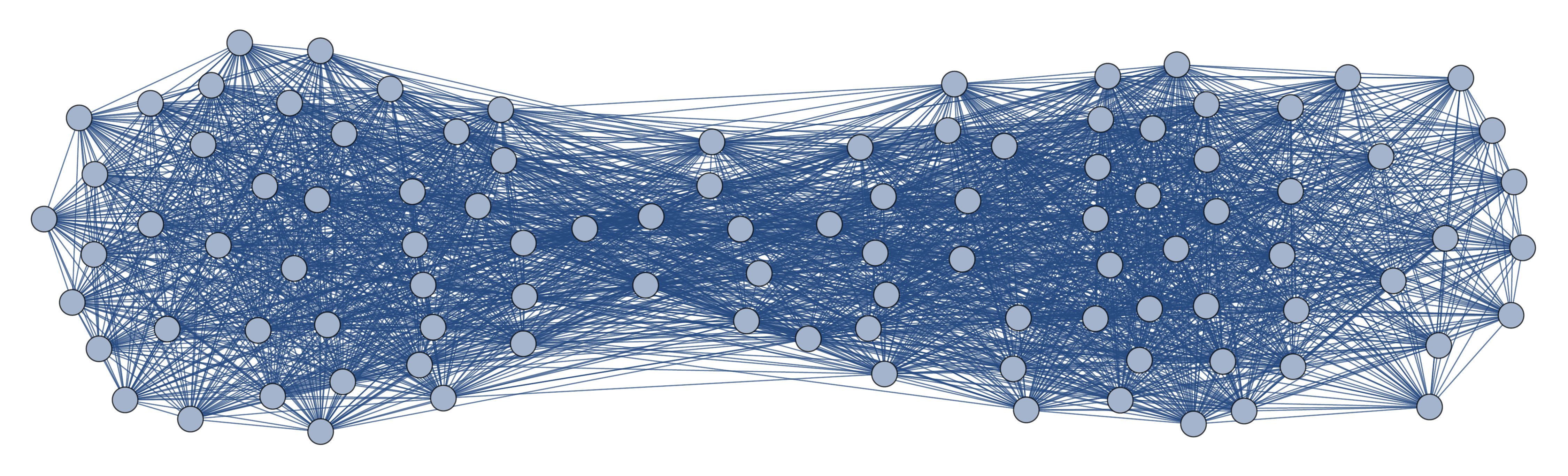}
			\end{minipage}
		\end{tabular}
		\caption{Indifference frustration graphs for some ECF models (see text). 
			\textbf{(a)}~Equipartition indifference graphs arise, for example, as frustration graphs for the translation-invariant spin chain (having open boundary conditions) with Hamiltonian $H_k = \sum_i X_i X_{i+1} \cdots X_{i+k-2} Z_{i+k-1}$. 
			The frustration graph is shown for $k=2,\ldots,5$. 
			When $k=2$, a Hadamard rotation on every second spin shows that this model is equivalent to the XY model with half of its terms removed, and it can be solved by a Jordan-Wigner transformation. 
			The $k=3$ model is the model studied by Fendley~\cite{fendley2019free}. 
			When $k> 2$, the graph is not a line graph, so it cannot be solved by any Jordan-Wigner transformation~\cite{Chapman2020Characterization}, but since it is ECF by construction, it can be solved by the methods introduced in this paper. 
			\textbf{(b,c)}~Two examples of ECF graphs that arise as indifference graphs with randomly chosen points at different densities. 
			It is evident that, depending on the point density, the connectivity of interval graphs can look rather complex.}
		\label{fig:XXXXZ}
	\end{figure}

	\subsection{Indifference Graphs}
	\label{sec:indifference}
	An infinite family of ECF graphs is given by the set of \textit{indifference graphs}. 
	Indifference graphs are defined by placing vertices on the real line and connecting two vertices if and only if they are separated by a distance $\le k$, for some fixed and finite $k$. 
	Such graphs are ECF since they have a known forbidden induced subgraph characterization that forbids (among others) the claw and even holes~\cite{wegner1967eigenschaften}. 
	They are therefore also simplicial. 
	In fact, the closed neighborhood of the vertex corresponding to the least real number is always a simplicial clique (its neighbors are all within distance 1 of each other, hence induce a clique). 
	Finally, it is simple to identify the independent sets for these graphs: they are the subsets of vertices whose pairwise separation on the real line is greater than 1.
	Some examples of indifference graphs are shown in \cref{fig:XXXXZ}. 
	Given an indifference graph $G$, there is no unique way to find a spin model Hamiltonian having $G$ as its frustration graph, though such models will always exist. 
	To get a natural mapping to spin models, we will specialize to the set of graphs (shown in \cref{fig:XXXXZ}(a)) where the vertices are equally spaced on the real line. 
	
	We therefore consider a particularly nice family of spin models, which generalizes the XY-model and the four-fermion model in Ref.~\cite{fendley2019free}.
	This family was originally introduced in Refs.~\cite{Alcaraz2020Free, Alcaraz2020Integrable} and the critical behavior analyzed there as well. 
	Here we demonstrate how this family fits into our formalism.
	Each model in the family is indexed by an integer $k$.
	When $k=2$, we get an XY-chain, albeit with half the terms removed. 
	This model is still solvable by a Jordan-Wigner transformation. 
	When $k=3$, we get the four-fermion model solved in Ref.~\cite{fendley2019free}.
	When $k \geq 4$, we get an infinite family of free-fermion-solvable models with translation-invariant frustration graphs. 
	The construction of the associated frustration graph with $N$ unit cells, $G(N, k)$, is simple: fix $k$, and consider the set of integers $M(N, k) \subset \mathds{Z}$
	\begin{align}
	M(N, k)  \coloneqq\bigcup_{n = 0}^{N - 1} \bigcup_{j = 0}^{k - 1} \left(n k + j \right) \,.
	\end{align}
	Let $m(n, j) \coloneqq nk + j$. 
	Associate a vertex of $G(N, k)$ to each point in $M(N, k)$, and join vertices corresponding to $m(n, j)$ and $m(n', j')$ by an edge if $|m(n, j) - m(n', j')| < k$. Then $G(N, k)$ is equivalent to the indifference graph of $M(N, k)$ after rescaling our distance function appropriately. 
	We will often refer to the vertices of $G(N, k)$ by their corresponding points in $M(N, k)$ directly. 
	We will shortly see that $N = \alpha$, the independence number of $G(N, k)$. 
	See \cref{fig:astral_triple_free4} for an example when $k = 4$. 
	
	\begin{figure}
		\centering
		\includegraphics[width=0.65\textwidth]{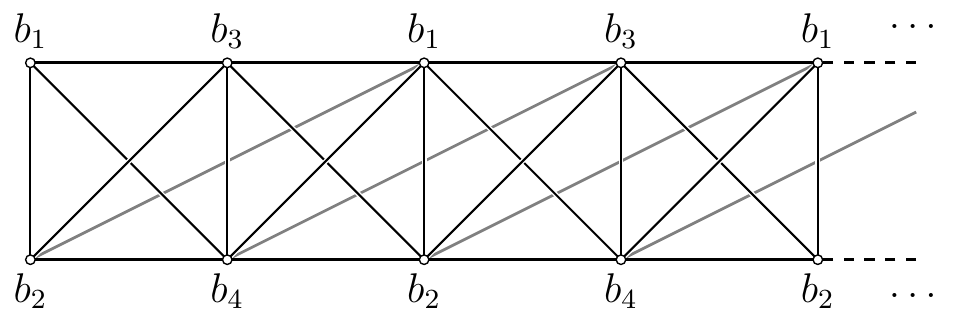}
		\caption{Equipartition indifference graph for $k=4$, formed from the frustration graph of the Hamiltonian in \cref{eq:pauli_equi_ham}.  
			The Hamiltonian couplings are 4-periodic and are labeled $b_1, b_2, b_3$, and $b_4$.}
		\label{fig:astral_triple_free4}
	\end{figure}
	
	An explicit qubit Pauli Hamiltonian realizing $G(N, k)$ is given in Refs.~\cite{Alcaraz2020Free, Alcaraz2020Integrable} 
	\begin{align}
	H =\sum_{n = 0}^{N - 1} \sum_{j = 0}^{k - 1} b_{nk + j} X_{n k + j} \prod_{\ell=1}^{k - 1}Y_{n k + j + \ell}\,.
	\label{eq:pauli_equi_ham}
	\end{align}
	Similarly to these references, we consider staggered, uniform couplings: $k$ different couplings which are repeated periodically as
	\begin{equation}
	b_{nk + j} \coloneqq b_j,
	\end{equation}
	For simplicity of expression we collect the squares of the coupling strengths in a vector
	\begin{align}
	\boldsymbol{b}=(b_0^2,b_1^2,b_2^2,...,b_{k - 1}^2).
	\end{align}
	Define the \emph{elementary symmetric polynomials} in $\boldsymbol{b}$ as
	\begin{align}
	e_j(\boldsymbol{b}) \coloneqq \sum_{0 \leq i_1 < i_2 < \dots < i_j \leq k - 1} \prod_{\ell = 1}^{j} b^2_{i_\ell}
	\end{align}
	for $j \in \{0, \dots, k\}$, with $e_0 \coloneqq 1$. Finally, denote the clique induced by the vertices corresponding to the points $\{m(n, 0), m(n, 1), \dots, m(n, k - 1)\} \subset M(N, k)$ in $G(N, k)$ by $K_n$. 
	Notice that 
	\begin{align}
	G(N, k) - \left(\sum_{p = 0}^{\ell} K_p\right) = G(N - \ell - 1, k)\,.
	\label{eq:graphconvention}
	\end{align}
	Since any independent set can contain at most one vertex from each clique $K_n$, we have that $\alpha \leq N$. An explicit independent set with $N$ vertices is given by $\cup_{n = 0}^{N - 1} m(n ,0)$. Therefore, $\alpha = N$. 
	
	Let us first show that $P_{G(N, k)}$ satisfies a recursion relation which is symmetric in the entries of $\boldsymbol{b}$, which follows from the graph-theoretic recurrence relations.
	\begin{align}
	P_{G(N, k)}  = P_{G(N - 1, k)} - \sum_{\ell = 1}^{k} u^{2\ell} e_{\ell} (\boldsymbol{b}) P_{G(N - \ell, k)} 
	\label{eq:symrecursion}
	\end{align}
	For $k = 4$, for example
	\begin{align}
	P_{G(N, 4)}  &= \left[1 - u^2 e_{1} (\boldsymbol{b})\right] P_{G(N - 1, 4)} - u^4 e_2 (\boldsymbol{b}) P_{G(N - 2, 4)} - u^6 e_3 (\boldsymbol{b}) P_{G(N - 3, 4)} - u^8 e_4 (\boldsymbol{b}) P_{G(N - 4, 4)} \,.
	\end{align}
	We show Eq.~(\ref{eq:symrecursion}) by first expanding $P_{G(N, k)}$ via the recursion relation Eq.~(\ref{eq:PKrecurrence}) in the clique $K_0$, with the convention in Eq.~(\ref{eq:graphconvention}). 
	Note that the neighbors to each vertex $m(0, j) \in K_0$, besides $K_0$ itself, are given by translations $\{m(1, \ell)\}_{\ell = 1}^{j - 1}$. 
	This gives,
	\begin{align}
	P_{G(N, k)} = P_{G(N - 1, k)} - u^2 \sum_{j = 0}^{k - 1} b_j^2 P_{G(N - 1,k) - \sum_{\ell = 0}^{j - 1} m(1, \ell)} \,.
	\label{eq:k0recursion}
	\end{align}
	We can rearrange similar expansions in the induced subgraphs of $K_1$, which are also cliques, to obtain
	\begin{align}
	P_{G(N - 1, k) - \sum_{\ell = 0}^{j - 1} m(1, \ell)} = P_{G(N - 1, k)} + u^2 \sum_{\ell = 0}^{j - 1} b_{\ell}^2 P_{G(N - 2, k) - \sum_{p = 0}^{\ell - 1} m(1, p)}
	\label{eq:k0indrecursion}
	\end{align}
	for $j \in \{1, \dots, k - 1\}$. Substituting Eq.~(\ref{eq:k0indrecursion}) into Eq.~(\ref{eq:k0recursion}) gives
	\begin{align}
	P_{G(N, k)} = \left[1 - u^2 e_1 (\boldsymbol{b})\right] P_{G(N - 1, k)} - u^4 \sum_{j = 0}^{k - 1} \sum_{\ell = 0}^{j - 1}  b_j^2 b_{\ell}^2 P_{G(N - 2, k) - \sum_{p = 0}^{\ell - 1} m(1, p)}
	\label{eq:k0recursionline2}
	\end{align}
	We can iterate this procedure by substituting \cref{eq:k0indrecursion}, with $G(N - 1, k)$ replaced by $G(N - 2, k)$, back into the sum over $j$ and $\ell$ in \cref{eq:k0recursionline2}. 
	Notice that each time we do this, the sum over single vertices $p$ in the subscript of the summand contains one fewer term, the equipartition indifference graph in this subscript contains one fewer clique $K_n$, and the coefficient in the summand acquires another factor from $\boldsymbol{b}$. 
	Iterating $k - 1$ times gives the desired recurrence relation, \cref{eq:symrecursion}.
	
	\begin{figure}
		\centering
		\includegraphics[scale=0.7]{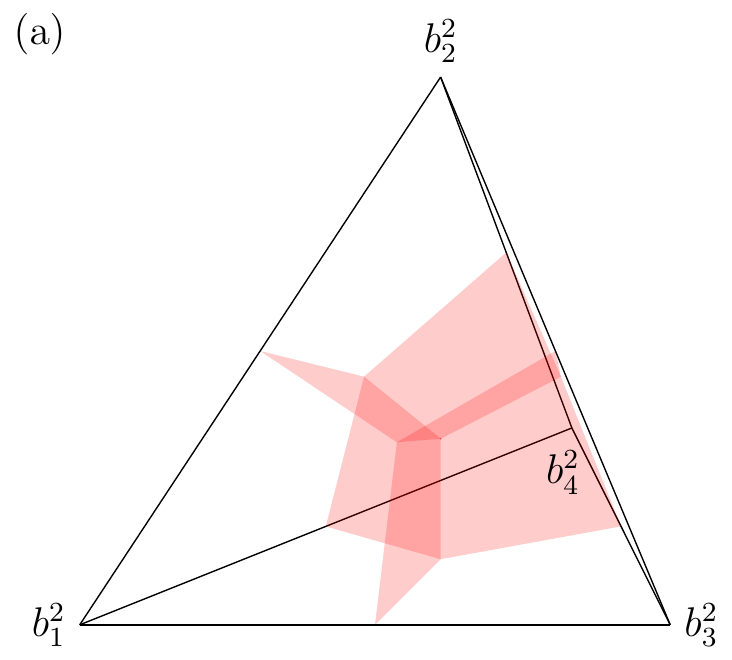}
		\\
		\includegraphics[scale=0.7]{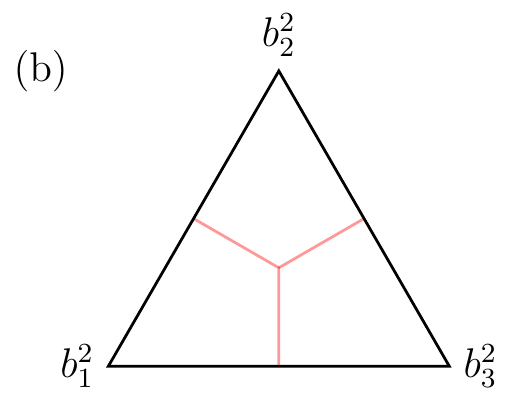}
		\includegraphics[scale=0.7]{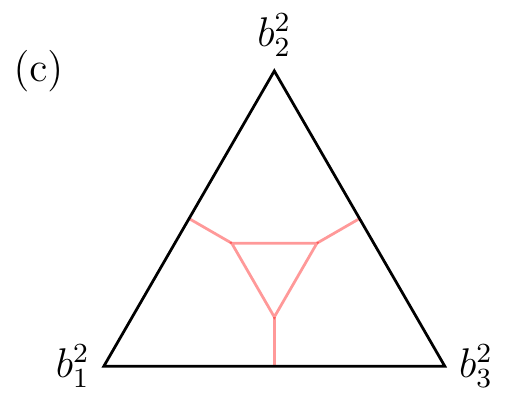}
		\includegraphics[scale=0.7]{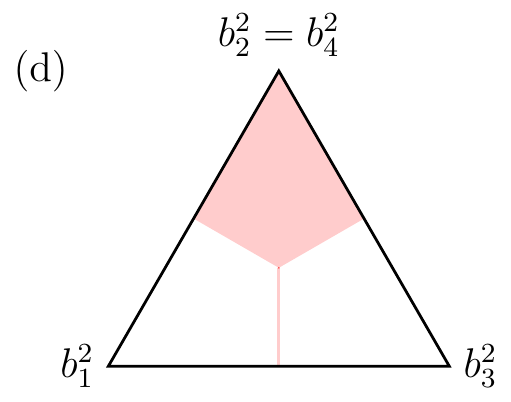}
		\caption{Phase diagram for the \textit{equipartition indifference graph} Hamiltonian for $k=4$, with staggered uniform couplings, $b_1,b_2,b_3,b_4$. 
			(a) shows a three dimensional simplex, with parameters. 
			(b) shows a cross section of the plane at $b_4=0$. 
			(c) shows a cross section parallel to that in (a), but with $b_4>b_1,b_2,b_3$, (d) shows a cross section at a diagonal through the tetrahedron where $b_2^2=b_4^2$ at all times}
		\label{fig:phase_diag}
	\end{figure}
	
	We next assemble a vector with entries
	\begin{equation}
	v_s(\epsilon^2)=\epsilon^{2s}P_{G(s - 1, k)}(\epsilon^{-2}),
	\label{eq:vector}
	\end{equation}
	such that the recursion relation, \cref{eq:symrecursion}, can be rewritten as
	\begin{align}
	v_{N + 1}  &= \epsilon^2 v_N - \sum_{\ell = 1}^{k} e_{\ell} (\boldsymbol{b}) v_{N - \ell + 1} \,.
	\label{eq:vectorrecursion}
	\end{align}
	As this recursion relation holds for any value of $N$, we can define the matrix $\boldsymbol{\mathcal{R}}$ with elements
	\begin{align}
	\mathcal{R}_{ss'}=\sum_{\ell=0}^k\delta_{s-\ell+1,s'}e_{\ell}(\boldsymbol{b}) \mathrm{,}
	\end{align}
	such that \cref{eq:vectorrecursion} has the form of an eigenvalue equation
	\begin{align}
	\boldsymbol{\mathcal{R}} \cdot \boldsymbol{v} = \epsilon^2 \boldsymbol{v} \,.
	\label{eq:eigen_value}
	\end{align}
	When the eigenvalue corresponds to a root $\varepsilon_j^{-2}$ of $P_{G(N, k)}$, the corresponding eigenvector $\boldsymbol{v}$ satisfies the boundary condition $v_{N + 1}(\epsilon_j^2) = 0$. 
	We further require $\boldsymbol{v}$ satisfy the boundary conditions $v_{0}~=~\dots~=~v_{-k + 2}~=~0$ (by our convention, $v_1 \propto P_{G(0, k)} = 1$).
	
	These models exhibit critical behavior when any subset of the coupling coefficients become equal. 
	For all $k$-sized equipartition indifference models, the phase diagram is a $(k-1)$- dimensional simplex with $k$-critical point at the center. 
	The phase diagram for $k=4$ is shown in \cref{fig:phase_diag} as both a three dimensional tetrahedron, as well as three cross sections of the depicting the gapped regions in white, with gapless regions in red.
	Here we see the two-dimensional semi-hyperplanes meeting at the center point of the tetrahedron, where $b_1^2=b^2_2=b^2_3=b^2_4$, as well as along one dimensional lines. 
	It is clear from the cross section in \cref{fig:phase_diag} (b) that the class of models is hereditary as boundary of the phase diagram corresponds directly to the phase diagram of the $k=3$ model (see Ref.~\cite{fendley2019free}).
	Interestingly, \cref{fig:phase_diag} (c) shows that as we increase the fourth parameter, $b_4^2$, the central, tri-critical point in the model opens and a gapless phase emerges.
	\cref{fig:phase_diag} (d) shows the cross-section through the center of the tetrahedron when $b_2^2=b_4^2$. 
	We see that there is a regime in which there is a large gapped phase, as well as two symmetric gapless phases separated by a gapped region.

	The critical behavior of these models has been exactly analyzed in Refs.~\cite{Alcaraz2020Free, Alcaraz2020Integrable} (and indeed, extended to parafermionic systems as well), and the authors find a dynamical critical exponent of $k/d$ for general qudits of dimension $d$ ($d = 2$ in our setting). 
	We add that the model can be numerically analyzed over the entire phase diagram using the fact that $\boldsymbol{\mathcal{R}}$ is a banded Toeplitz matrix with bandwidth $k+1$ and applying the algorithm in Ref.~\cite{Beam1993Asymptotic} to find the dispersion relation in the asymptotic limit.
	\Cref{fig:dispersion_relation} shows the dispersion relations along the central axis from one vertex  ($b_4^2 = 1$) to the center of the opposite face $(b_4^2 = 0)$, with other coefficients equal and normalized. 
	We find that the model is indeed critical when $b_4^2 \leq 0.25$, and the dispersion relation is nonlinear about this point.

	\begin{figure}
		\centering
		\includegraphics[width=0.7\textwidth]{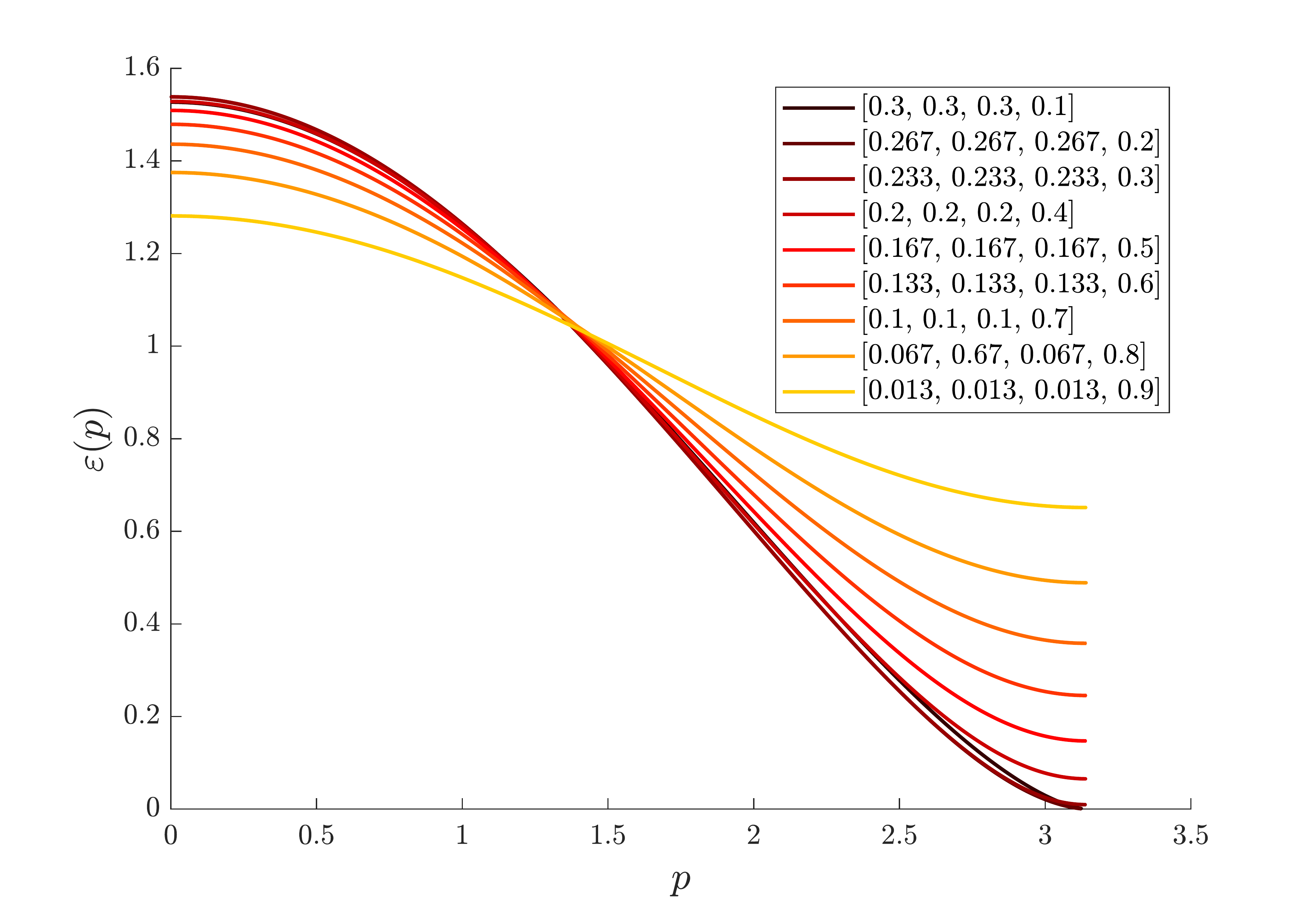}
		\caption{Dispersion relations for the $k = 4$ instance of equipartition indifference models, with $b_4^2 \in \{0.1, 0.2, \dots, 0.9\}$, and other couplings equal such that the sum of their squares is normalized to 1. 
			For $b_4^2 \leq 0.25$, the model is critical at momentum $p = \pi$, as our numerics indicate.}
		\label{fig:dispersion_relation}
	\end{figure}

	\subsection{Integrable models}
	Here we define a new family of two-dimensional models not previously discussed in the literature.
	The models are formed by attaching the one-dimensional chains from \cref{sec:indifference} to one another, via interaction terms with clique-like frustration graphs.
	It is straightforward to define a spin model satisfying such a frustration graph.
	In order to ensure that the graphs remain claw-free, at each point of attachment, or junction, the clique must contain at least twice as many vertices as attached chains. This ensures that the neighborhood of every vertex within the junction clique induces at most two possibly neighboring cliques (bi-simplicial).
	Consider the example frustration graph depicted in \cref{fig:frust_graph1}.
	The junction is trivalent; however, in order to ensure that the model remains claw-free, the junction consists of a clique of six vertices ($K_6$).
	
	Notice that if the joining of the chains is tree-like in its coarse topology, the model will be even-hole-free and thus free-fermion solvable using the methods developed here.
	We imagine such tree-like structures may be useful for probing quantum scrambling.
	On the contrary, if the new structure is genuinely two-dimensional, then the frustration graph will necessarily contain even-holes. Nevertheless, the model will still be integrable, due to \cref{lem:conserved_charges}.
	
	\begin{figure}
	    \centering
	    \includegraphics[width=0.5\textwidth]{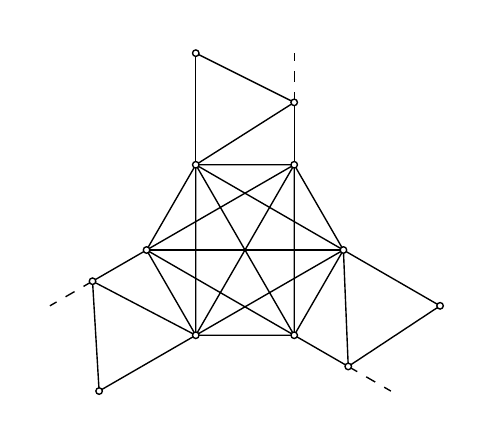}
	    \caption{An example frustration graph of a junction for more complex models which are either integrable or solvable.}
	    \label{fig:frust_graph1}
	\end{figure}
	
	\section{Discussion}
	\label{sec:discussion}
	We have proven that Hamiltonians with (even-hole, claw)-free frustration graphs in a given basis admit a solution by noninteracting fermions, even when such models provably do not admit a Jordan-Wigner solution. 
	Though our result considerably expands the set of known free-fermion solutions, we should note that it clearly does not capture all of them. 
	First, there exist models whose free-fermion solution is \emph{non-generic}, in that they only hold for specific values of the coupling strengths. 
	As an example, we can consider the following model on three qubits: 
	\begin{equation}
	H=aZ_2+bY_1X_2+cX_1Y_2+dZ_1Y_3+eY_1X_3+fZ_3
	\label{eq:back_to_back}
	\end{equation}
	The frustration graph of the model is depicted in~\cref{fig:back_to_back}. 
	The graph  clearly  contains  both  claws  and  even holes, and is thus outside the class of models discussed in this paper.
	In general, the model is not free for arbitrary couplings. 
	Nevertheless, the model does have a free spectrum for all equal couplings ($a=b=c=d=e=f$).
	Further, we can numerically verify that the single particle energies of the model are the reciprocals of the roots of the vertex weighted independence polynomial of the graph.
	
	A perhaps much deeper open question concerns the relationship between the spatial structure of a given model and the associated free-fermion modes which emerge from this solution. 
	We have structured our argument to draw a parallel between the way in which these mappings generalize the Jordan-Wigner transformation---whose spatial structure is evident---with the way claw-free graphs generalize line graphs.
	Carrying this argument through, one might ask whether the even-hole-free assumption can be relaxed, as Hamiltonians whose frustration graphs are arbitrary line graphs still admit a Jordan-Wigner free-fermion solution. 
	From a technical perspective, simplicial claw-free graphs enjoy many of the properties that we relied on to prove our general solution. 
	Models with simplicial claw-free frustration graphs admit an extensive number of commuting conserved charges defined through their independent sets (i.e., they satisfy \Cref{lem:conserved_charges}), and their independence polynomials are also real-rooted \cite{Chudnovsky2007Roots, Engstrom07inequalitieson, Leake2016Generalizations}. 
	One might attempt to incorporate even holes into this formalism by defining spatial hopping terms $\psi_j \psi_k^{\dagger}$. 
	The simplicial mode cancels in the definition of these quadratic operators, leaving them defined only in terms of Hamiltonian terms. 
	However, the resulting expressions are very complicated. 
	Furthermore, allowing for even holes requires us to generalize \Cref{lem:transfer_poly}, which was crucial for the following proof. 
	Though we cannot say anything definitive about the more general class of simplicial claw-free graphs currently (indeed, they may not admit a free-fermion solution at all), we remark that they would be a natural class of models for further study. 
	
	\begin{figure}[t]
		\centering\includegraphics{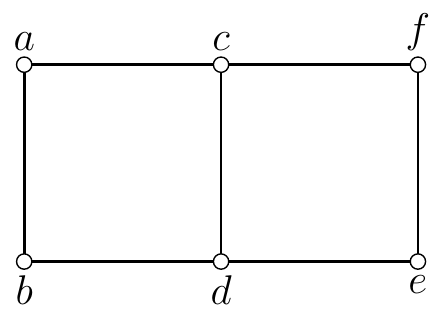}
		\caption{The frustration graph of the model, defined in \Cref{eq:back_to_back}. This model has claws and even holes, nevertheless for all-equal-coefficients the model has a free fermion spectrum.}
		\label{fig:back_to_back}
	\end{figure}
	
	Another clear open question concerns whether this construction could be generalized to solutions of qudit models in terms of \textit{parafermions}~\cite{jaffe}.
	The concept of free parafermions has been developed by Fendley~\cite{Fendley2014Free}, and Refs.~\cite{Alcaraz2020Integrable,Alcaraz2020Free} consider non-Hermitian qudit generalizations of the equipartition indifference graphs which have free parafermionic spectra.
	It is known that, unlike fermions, bilinear parafermions are not always free, yet our formalism may provide a clue to recognizing such systems. 
	Recall that the structure theorem of Ref.~\cite{Cameron2017On} states that (even-hole, pan)-free graphs (which generalize our class of graphs) are essentially unit circular-arc graphs connected by clique cutsets.
	Incidentally, a subset of bilinear parafermion models have frustration graphs given by oriented unit circular-arc graphs, where the orientation captures the fact that the group-commutator between bilinear parafermionic terms is a complex phase, and this orientation is inherited from an underlying orientation on the circular-arc representation. 
	This characterization could therefore clarify the relationship between free-parafermion models, the models considered in Refs.~\cite{Alcaraz2020Integrable,Alcaraz2020Free}, and more general bilinear parafermion models. 
	
	One strength of the solution method developed in the present work is that it could in principle be applied to \emph{interacting} fermion models in addition to qubit models.
	Indeed, Fendley's four-fermion model is an obvious example. 
	This is because the existence of the solution is independent of the Pauli realization and relies only on the graph structure of the model.
	In Ref.~\cite{fendley2019free}, Fendley suggests applying this solution method to the cooper pair model of Refs.~\cite{Fendley2007Cooper,vanVoorden2019topological}, which represents one other such fermion model. 
	However, the frustration graph of this model has even holes, and so is ineligible for solution by our method. 
	Nevertheless, it would be interesting to investigate our method as a starting point for approximate solutions to non-integrable models such as quantum impurity models~\cite{Bravyi2017Complexity}. 
	One potential application would be to extend the exact analysis of Fendley's four-fermion model to an approximate one on periodic boundary conditions. 
	We leave such questions for future work.

	\begin{acknowledgements}
		We thank Paul Fendley and Sergey Bravyi for initially pointing us to Ref.~\cite{fendley2019free}. 
		We also thank Ryan Mann for enlightening discussions on graph theory and independence polynomials early in this project. 
		Finally, we thank Maria Chudnovsky and Paul Seymour for providing additional details to us about the algorithm in Ref.~\cite{Chudnovsky2012Growing} for finding a simplicial clique in a claw-free graph in polynomial time, and for sharing the result from Ref.~\cite{CSSS} that every ECF graph has a simplicial clique. 
		This research was supported by the Australian Research Council Centre of Excellence for Engineered Quantum Systems (EQUS, CE170100009).
	\end{acknowledgements}

\end{document}